\documentclass[format=acmsmall, review=false]{acmart}

\usepackage{acm-ec-20} 

\usepackage{amsthm}

\usepackage{algorithm}
\usepackage[noend]{algpseudocode}

\usepackage{pgfplots}

\usepackage{tcolorbox}

\setcitestyle{acmnumeric}

\newcommand*{\draft}{}

\newcommand{\algrule}{\Statex \hrulefill}

\ifdefined\draft
    \newcommand{\patrick}[1]{\textcolor{blue}{[Patrick: #1]}}
    \newcommand{\andy}[1]{\textcolor{red}{[Andy: #1]}}
    \newcommand{\brendan}[1]{\textcolor{orange}{[Brendan: #1]}}
\else
    \newcommand{\patrick}[1]{}
    \newcommand{\andy}[1]{}
    \newcommand{\brendan}[1]{}
\fi

\begin{document}

\title{Minimmit: Fast Finality with Even Faster Blocks}

\author{Brendan Kobayashi Chou\textsuperscript{1}}
\author{Andrew Lewis-Pye\textsuperscript{1,2}}
\author{Patrick O'Grady\textsuperscript{1}}

\renewcommand{\shortauthors}{Chou et al.}

\affiliation{%
\textsuperscript{1}\institution{Commonware}
\country{USA}
}

\affiliation{%
\textsuperscript{2}\institution{London School of Economics}
\country{UK}
}



\date{August 2025}

\begin{abstract}
Achieving low-latency consensus in geographically distributed systems remains a key challenge for blockchain and distributed database applications. To this end, there has been significant recent interest in State-Machine-Replication (SMR) protocols
that achieve \emph{2-round finality} under the assumption that $5f+1\leq n$, where $n$ is the number of processors and $f$ bounds the number of processors that may exhibit Byzantine faults. In these protocols, instructions are organised into \emph{views}, each led by a different designated leader, and 2-round finality means that a leader's proposal can be finalised after just a single round of voting, meaning two rounds overall (one round for the proposal and one for voting).

We introduce Minimmit, a Byzantine-fault-tolerant SMR protocol with lower latency than previous 2-round finality approaches.  Our key insight is that view progression and transaction finality can operate on different quorum thresholds without compromising consistency or liveness. Experiments simulating a globally distributed network of 50 processors, uniformly assigned across ten virtual regions, show that the approach leads to a 23.1\% reduction in view latency and a 10.7\% reduction in transaction latency compared to the state-of-the-art.
\end{abstract}

\maketitle

\section{Introduction} \label{intro}
Protocols for State-Machine-Replication (`blockchain' protocols) are commonly designed to tolerate periods of unreliable message delivery. Formally, this means that protocols are required to satisfy \emph{consistency} (all correct processors agree on the sequence of finalised transactions) and \emph{liveness} (transactions are eventually finalised by correct processors) in the \emph{partially synchronous} setting. In this setting, it is known \cite{DLS88} that satisfying both liveness and consistency is possible precisely if $n\geq 3f+1$, where $n$ is the number of processors and $f$ bounds the number of processors that may exhibit Byzantine faults.

 \vspace{0.1cm}
A key metric when considering the efficiency of State-Machine-Replication (SMR) protocols  is \emph{transaction latency}, i.e., the time it takes for transactions to be \emph{finalised}.
Most protocols organise operations into \emph{views}, led by designated \emph{leaders}, making the number of communication rounds per view a crucial consideration in analysing latency.
 Under the standard assumption that $n\geq 3f+1$, it is known \cite{abraham2021good} that `3-round' finality is optimal: here 3-round finality means that a leader's proposal can be finalised after two further rounds of voting (this gives three communication rounds overall: round 1 for the proposal and rounds 2 and 3 for voting). Accordingly, standard protocols such as PBFT \cite{castro1999practical} and Tendermint \cite{buchman2016tendermint,buchman2018latest} have 3-round finality. On the other hand, it has been known for some time \cite{martin2006fast} that 2-round finality (one round for the proposal and one further round of voting) is possible under the assumption that $n\geq 5f+1$. Recent research has seen a significant renewed interest in approaches to 2-round finality: In the last year, Matter Labs released ChonkyBFT \cite{francca2025chonkybft}, Offchain Labs released Kudzu \cite{shoup2025kudzu},  Anza Labs and Solana  released Alpenglow \cite{alpen}, and Supra Research and Espresso Systems released Hydrangea \cite{shrestha2025hydrangea}. In part, the motivation for this focus on the $n\geq 5f+1$ assumption is driven by the scale of modern blockchain systems. When SMR protocols were first deployed in the 1980s, $n$ was small. Today, a typical blockchain system may consist of thousands of processors, meaning that Byzantine attacks on a significant fraction  of participants would be extremely costly, and are deemed unlikely.  This makes it appropriate to shift the resilience requirement from $n\geq 3f+1$ to $n\geq 5f+1$, since doing so allows for a significant reduction in  latency.

 \vspace{0.1cm}
This paper describes Minimmit, which differs from previous protocols with 2-round finality by foregoing the \emph{slow path} that is used by most previous protocols when 2-round finality fails, and by allowing processors to proceed to the next view after receiving only $2f+1$ votes. The idea behind this tradeoff is that, in a globally distributed network of processors with a range of connection speeds, receiving a smaller quorum of votes will often take significantly less time than waiting for a larger quorum of votes. Requiring the smallest possible quorum for view progression therefore reduces view latency (time between views). This also reduces transaction latency, i.e., the time it takes to finalise transactions,  since transactions wait for less time before being included in a block.


 \subsection{Evidence for reduced latency}
 To evaluate the performance of Minimmit, we implemented a network simulator\footnote{\url{https://github.com/commonwarexyz/monorepo/tree/19f19d32760daf1d497295726ec92a1e6b84959f/examples/estimator}} that runs custom workloads over configurable network topologies. This simulator provides each participant the opportunity to drive the simulation (i.e. propose a block) while collecting telemetry by region and role. To model realistic network conditions, we apply latency and jitter to message broadcasts using empirical measurements of AWS inter-region performance over the public internet from the past year.

 \vspace{0.1cm}
 We distinguish \emph{view latency} (time between views), \emph{block latency} (time to finalise blocks), and \emph{transaction latency} (time to finalise transactions). To analyse how different protocols compare across these metrics, we tested Simplex \cite{chan2023simplex}, Kudzu \cite{shoup2025kudzu}, and Minimmit on a globally distributed network of 50 processors uniformly assigned across ten ``virtual'' regions: us-west-1, us-east-1, eu-west-1, ap-northeast-1, eu-north-1, ap-south-1, sa-east-1, eu-central-1, ap-northeast-2, and ap-southeast-2. While we implemented Simplex specifically, finalisation times will be similar for other standard 3-round finality protocols that employ `all-to-all' message sending, such as PBFT and Tendermint, since they use the same structure of leader proposal, followed by two rounds of voting. Likewise, Kudzu is representative of other 2-round finality protocols that employ `all-to-all' message sending and a fast path mechanism, such as Alpenglow\footnote{As discussed in Section \ref{exp}, direct comparisons with Alpenglow are complicated by the fact that Alpenglow incorporates a scheme for disseminating erasure-coded block data during fixed
400ms slots, and so is non-responsive. The fixed 400ms window causes significantly increased view latency for Alpenglow for the parameters considered in our experiments.} and Hydrangea.  Under these conditions, Simplex achieves view latency  194ms (with standard deviation $\sigma$=30ms) and block latency  299ms ($\sigma$=26ms). Kudzu achieves view latency 190ms and block latency  220ms ($\sigma$=29ms). Minimmit achieves view latency 146ms ($\sigma$=21ms)  and block latency 220ms ($\sigma$=28ms). Minimmit reduces view latency by 25\% compared to Simplex and 23\% compared to Kudzu. Minimmit reduces block latency by 26\% compared to Simplex, equivalent to Kudzu.

 \vspace{0.1cm}
 For fixed sized blocks, decreased view latency clearly translates into increased throughput. However, it also  produces  reduced transaction latency for users. Consider a transaction submitted immediately after block construction at `height'\footnote{The \emph{height} of a block is its number of ancestors. Ancestors are formally defined in Section \ref{setup}.} $h$. This transaction must await inclusion in a block of height $h+1$ and for that block's subsequent finalisation. In recently proposed protocols operating under the $n\geq 5f+1$ assumption -- such as Kudzu, Alpenglow, and Hydrangea -- view progression requires $n-2f$ votes. Consequently, transaction finalisation requires waiting for both the completion of the current view (190ms) and the finalisation of the subsequent block containing the transaction (220ms), totalling 410ms in this model. In contrast, Minimmit's view progression threshold of $2f+1$  enables the same transaction to achieve finalisation in 146ms + 220ms = 366ms, representing a 10.7\% reduction in end-to-end latency.

\subsection{Our contributions}
Our contributions are as follows:
\begin{enumerate}
\item We describe a novel view-change mechanism to reduce view latency;
\item We give formal proofs of safety, liveness and optimistic responsiveness,\footnote{Optimistic responsiveness will be defined in Section \ref{resp}.} under the $n\geq 5f+1$ assumption;
\item We carry out experimental evaluations, showing roughly a 23\% reduction in view latency and an 11\% reduction in transaction latency compared to the state-of-the-art.
\item In Section \ref{opt}, we describe (along with other optimisations) a mechanism that uses aggregate signatures to bound the communication required to achieve liveness after periods of asynchrony. This approach may also be applied to improve efficiency for other protocols in the `Simplex mould'.
\item In Appendices \ref{erasuresetup}--\ref{erasureanalysis} we show how to implement a version of Minimmit that uses erasure coding with a \emph{data expansion} rate of 2.5. If  blocks are of size $B$, this means that the total amount of block data sent by each leader (to all processors combined) is roughly $2.5B$.  
\end{enumerate}

\subsection{Dropping the slow path}
Many 2-round finality protocols implement a `slow path' to be used when 2-round finality fails: generally, this slow path just requires an extra round of voting.  Since Minimmit achieves improved latency by foregoing the `slow path', it is interesting to analyse any resulting sacrifice in resilience, when compared to other leading protocols.

\vspace{0.1cm}
\noindent \textbf{Alpenglow}. Alpenglow is formally analysed under the same assumption that $n\geq 5f+1$. While there is some consideration of circumstances in which the protocol can tolerate a further $f$ crash failures, the required assumptions for this case (essentially that Byzantine leaders cannot carry out a form of proposal equivocation) do not hold under partial synchrony.

\vspace{0.1cm}
\noindent \textbf{Kudzu}. Kudzu makes the more general assumption that $n\geq 3f+2p+1$ for a tunable parameter $p$. Of course, in the case that $p=f$, this corresponds to the same assumption that $n\geq 5f+1$. The protocol guarantees liveness and consistency so long as at most $f$ processors display Byzantine faults. It also guarantees that, during synchrony,  a correct leader can finalise a block after a single round of voting, so long as the number of faulty processors is $\leq p$. For a correct leader during synchrony, the `slow path' (requiring two rounds of voting) is therefore only relevant if the number of faulty processors is strictly between $p$ and $f$. In this case, the quorum required for finality after a single round of voting is also larger than for Minimmit ($n-p$ rather than $n-f$). Kudzu therefore gives no strict `like-for-like' improvement in resilience compared to Minimmit, although the tunable parameter that Kudzu provides is beneficial.

\vspace{0.1cm}
\noindent \textbf{Hydrangea}. Compared to Minimmit, Alpenglow, and Kudzu, Hydrangea has improved resilience to crash failures. As before, consider the case of a correct leader during synchrony. For a parameter $k\geq 0$, and for a system of $n = 3f + 2c + k + 1$ processors, Hydrangea achieves finality after 1 round of voting, so long as the number of faulty processors (Byzantine or crash) is at most $p = \lfloor \frac{c+k}{2} \rfloor$. In the case that $c=0$, this aligns precisely with the bounds provided by Kudzu. However, in more adversarial settings with up to $f$ Byzantine faults and $c$ crash faults,
Hydrangea also obtains finality after two rounds of voting. Of course, the benefit of Minimmit is that it achieves reduced transaction latency in the case that $n\geq 5f+1$.

\subsection{Paper structure}
The remainder of the paper is structured as follows:
\begin{itemize}
\item Section \ref{setup} describes the formal setup.
\item Section \ref{intu} describes the intuition behind Minimmit.
\item Section \ref{formal} formally specifies the Minimmit protocol.
\item Section \ref{analysis} gives formal proofs of consistency and liveness.
\item Section \ref{opt} describes a number of optimisations, such as the use of threshold signatures, erasure coding, and the use of aggregate signatures to speed up message dissemination and fast recovery after periods of asynchrony.
\item Section \ref{exp} describes our experiments and results.
\item Section \ref{rw} describes related work.
\item Section  \ref{fc} contains some final comments and conclusions.
\item Appendix  \ref{app:repro} describes how to reproduce our experiments using the \texttt{commonware-estimator}.\footnote{\url{https://github.com/commonwarexyz/monorepo/tree/19f19d32760daf1d497295726ec92a1e6b84959f/examples/estimator}}
\item Appendix   \ref{erasuresetup} describes the setup required for erasure coding. 
\item Appendix   \ref{erasureintu} gives the intuition behind our efficient integration of erasure codes. 
\item Appendix  \ref{erasurespec} gives a formal specification of  E-Minimmit (Minimmit with erasure codes). 
\item Appendix   \ref{erasureanalysis} establishes consistency and liveness for E-Minimmit. 
\end{itemize}

\section{The Setup} \label{setup}

We consider a set $\Pi = \{p_1, \ldots, p_{n} \}$ of $n$ processors. For $f$ such that $5f+1\leq n$, at most $f$ processors may become corrupted by the adversary during the course of the execution, and may then display  \emph{Byzantine} (arbitrary) behaviour.
Processors that never become corrupted by the adversary are referred to as \emph{correct}.

\vspace{0.2cm}
\noindent \textbf{Cryptographic assumptions}. Our cryptographic assumptions are standard for papers on this topic. Processors communicate by point-to-point authenticated channels. We use a cryptographic signature scheme, a public key infrastructure (PKI) to validate signatures, and a collision resistant hash function $H$.\footnote{In Section \ref{opt} and the appendices,  we will also consider optimisations of the protocol that use threshold signatures, aggregate signatures, and erasure codes.}  We assume a computationally bounded adversary. Following a common standard in distributed computing and for simplicity of presentation (to avoid the analysis of negligible error probabilities), we assume these cryptographic schemes are perfect, i.e., we restrict attention to executions in which the adversary is unable to break these cryptographic schemes.

  \vspace{0.2cm}
\noindent \textbf{The partial synchrony model}. As noted above, processors communicate using point-to-point authenticated channels. We consider the standard partial synchrony model, whereby the execution is divided into discrete timeslots $t\in \mathbb{N}_{\geq 0}$ and a message sent at time $t$ must arrive at time $t'>t$ with $t'\leq \max\{\text{GST},t\} + \Delta$. While $\Delta$ is known, the value of GST is unknown to the protocol. The adversary chooses GST and also message delivery times, subject to the constraints already defined.
Correct processors begin the protocol execution before GST and are not assumed to have synchronised clocks.  For simplicity, we do assume that
the clocks of correct processors all proceed in real time,
meaning that if
$t'>t$ then the local clock of correct $p$ at time $t'$ is $t'-t$ in
advance of its value at time $t$. Using standard arguments, our
  protocol and analysis can
be extended in a straightforward way to the case
in which there is a known upper bound on the difference
between the clock speeds of correct processors.

\vspace{0.2cm}
\noindent \textbf{Transactions}. Transactions are messages of a
distinguished form,  signed by the \emph{environment}. Each timeslot, each processor may receive some
finite set of transactions directly from the environment. We make the standard assumption that transactions are unique (repeat transactions can be produced using an increasing `ticker'  or timestamps \cite{castro1999practical}).

\vspace{0.2cm}
\noindent \textbf{State machine replication}.
If $\sigma$ and $\tau$
are sequences, we write $\sigma \preceq \tau$ to denote that
$\sigma$ is a prefix of $\tau$.  We say $\sigma$ and~$\tau$ are
\emph{compatible} if $\sigma\preceq \tau$ or $\tau\preceq \sigma$.
If two sequences are not compatible, they are \emph{incompatible}.
If~$\sigma$ is a sequence of transactions, we write
$\text{tr}\in \sigma$ to denote that the transaction $\text{tr}$
belongs to the sequence $\sigma$.
Each processor $p_i$ is required to maintain an append-only log, denoted $\text{log}_i$, which at any timeslot is a sequence of distinct transactions. We also write $\text{log}_i(t)$ to denote the value $\text{log}_i$ at the end of timeslot $t$. The log being append-only means that for $t'>t$,  $\text{log}_i(t) \preceq \text{log}_i(t')$.
We
require the following conditions to hold in every execution:

\vspace{0.1cm}
\noindent \emph{Consistency}.   If $p_i$ and $p_j$ are correct, then for any timeslots $t$ and $t'$, $\text{log}_i(t)$ and $\text{log}_j(t')$ are compatible.

\vspace{0.1cm}
\noindent \emph{Liveness}. If $p_i$ and $p_j$ are correct and if $p_i$ receives the transaction $\text{tr}$ then, for  some  $t$,  $\text{tr}\in \text{log}_j(t)$.

\vspace{0.2cm}
\noindent \textbf{Blocks, parents, ancestors, and finalisation}. We specify a protocol that produces \emph{blocks} of transactions. Among other values, each block $b$ specifies a value $b.\text{Tr}$, which is a sequence of transactions. There is a unique \emph{genesis} block $b_{\text{gen}}$, which is considered \emph{finalised} at the start of the protocol execution. Each block $b$ other than the genesis block has a unique \emph{parent}. The \emph{ancestors} of $b$ are $b$ and all ancestors of its parent (while the genesis block has only itself as ancestor), and each block has the genesis block as an ancestor.  Each block $b$ thus naturally specifies an extended sequence of transactions, denoted $b.\text{Tr}^*$, given by concatenating the values $b'.\text{Tr}$ for all ancestors $b'$ of $b$, removing any duplicate transactions. When a processor $p_i$ \emph{finalises} $b$ at timeslot $t$, this means that, upon obtaining all ancestors of $b$,  it sets $\text{log}_i(t)$ to extend $b.\text{Tr}^*$.
Two blocks are \emph{inconsistent} if neither is an ancestor of the other.

\section{The intuition} \label{intu}

In this section, we give an informal account of the intuition behind the protocol design.

\vspace{0.2cm}
\noindent \textbf{One round of voting}. We aim to specify a standard form of \emph{view-based} protocol, in which each view has a designated \emph{leader}. If a view has a correct leader and begins after GST, the leader should send a block $b$ to all other processors, who will then send a  signed \emph{vote} for $b$ to all others. Upon receipt of $n-f$  votes for $b$ (by distinct processors), our intention is that processors should then be able to immediately finalise $b$: this is called \emph{2-round finality} \cite{martin2006fast} (one round to send the block and one round for voting). $5f-1\leq n$ is necessary and sufficient\footnote{See \cite{kuznetsov2021revisiting} and \url{https://decentralizedthoughts.github.io/2021-03-03-2-round-bft-smr-with-n-equals-4-f-equals-1/}} for 2-round finality: we make the assumption that $5f+1\leq n$ for simplicity.  We'll call a set of $n-f$ votes for a block an \emph{L-notarisation} (where `L' stands for `large').

\vspace{0.2cm}
\noindent \textbf{When to enter the next view?} We specified above that a single L-notarisation should suffice for finality. However, as well as achieving 2-round finality, we also wish  processors to proceed to the next view immediately upon seeing a smaller set of votes, which we'll call an \emph{M-notarisation} (where the `M' stands for `mini'). As explained in Section \ref{intro}, the rationale behind this is that, in realistic scenarios,  receiving $n-f$ votes will generally take much longer than receiving a significantly smaller set of votes (of size $2f+1$, say). Proceeding to view $v+1$ immediately upon seeing an M-notarisation for $b$ in view $v$ therefore speeds up the process of block formation: processors can begin view $v+1$ earlier, and then finalise $b$ upon later receiving the larger notarisation.

How many votes should we require for an M-notarisation? Note that, so long as correct processors don't vote for more than one block in a view, the following will hold:

\begin{enumerate}
\item[(X1)] If $b$ for view $v$ receives an L-notarisation, then no block $b'\neq b$ for view $v$ receives $2f+1$ votes.
\end{enumerate}
To see this, suppose towards a contradiction that $b$ for view $v$ receives an L-notarisation and  $b'\neq b$ for view $v$ receives $2f+1$ votes. Let $P$ be the set of processors that contribute to the L-notarisation for $b$, and let $P'$ be the set of processors that vote for $b'$.  Then  $|P\cap P'| \geq (n-f)+(2f+1)-n = f+1$. So, $P\cap P'$ contains a correct processor, which contradicts the claim that correct processors don't vote for two blocks in one view.

So, if we set an M-notarisation to be a set of $2f+1$ votes for $b$, and allow processors to proceed to view $v+1$ upon seeing an M-notarisation for $b$ in view $v$, then any processor receiving an M-notarisation for $b$ knows that no block $b'\neq b$ for view $v$ can receive an L-notarisation. In particular, it is useful to see things from the perspective of the leader, $p_i$ say, of view $v+1$. If $p_i$ has seen an M-notarisation for $b$ in view $v$, and so long as $p_i$ resends this to other processors,\footnote{To reduce communication complexity, this could be done using a threshold signature scheme, but we defer such considerations to Section \ref{opt}.} $p_i$ can be sure that all processors have proof that no block other than $b$ could have been finalised in view $v$. So, $p_i$ can propose a block with $b$ as parent, and all correct processors will have evidence that it is safe to vote for $p_i$'s proposal.

\vspace{0.2cm}
\noindent \textbf{Nullifications}. Since it may be the case that no block for view $v$ receives an M-notarisation (e.g., if the leader is Byzantine),  processors must sometimes produce signed messages indicating that they wish to move to view $v+1$ because of a lack of progress in view $v$: we'll call these nullify$(v)$ messages.\footnote{Our approach here is somewhat similar to Simplex \cite{chan2023simplex}, but the reader need not have familiarity with that protocol to follow the discussion.} We do not yet specify precisely when processors send $\text{nullify}(v)$ messages (we will do so shortly). For now, we promise only that a statement analogous to (X1) will hold for $\text{nullify}(v)$ messages:
\begin{enumerate}
\item[(X2)] If some block $b$ for view $v$ receives an L-notarisation, then it is not the case that $2f+1$ processors send $\text{nullify}(v)$ messages.
\end{enumerate}
With the promise of (X2) in place, let's define a \emph{nullification} to be a set of $2f+1$ nullify$(v)$ messages (signed by distinct processors). We specify that a processor should enter view $v+1$ upon receiving either:
\begin{itemize}
\item  An M-notarisation for view $v$, or;
\item  A nullification for view $v$.
\end{itemize}
\noindent So long as (X2) holds, any processor receiving a nullification for view $v$ knows that no block for view $v$ can receive an L-notarisation.

\vspace{0.2cm}
\noindent \textbf{Defining valid proposals so as to maintain consistency}. Suppose $p_i$ is the leader of view $v$:
\begin{enumerate}
\item[(i)]  Upon entering view $v$, $p_i$ finds the greatest $v'<v$ such that it has received an M-notarisation for some block $b$ for view $v'$: since $p_i$ has entered view $v$, it must have received nullifications for all views in $(v',v)$. Processor $p_i$ then proposes a block $b'$ with $b$ as parent.
\item[(ii)]  Other processors will vote for $b'$, so long as they see nullifications for all views in $(v',v)$ and an M-notarisation for $b$: since all processors (including $p_i$) will resend notarisations and nullifications upon first receiving them, if view $v$ begins after GST, $p_i$ can therefore be sure that all correct processors will receive the messages they need in order to vote for $b'$.
\end{enumerate}

It should also not be difficult  to see that this will guarantee consistency (see Section \ref{analysis} for the full proof). Towards a contradiction, suppose that $b_1$ for view $v_1$ receives an L-notarisation, and that there is a least view $v_2\geq v_1$ such that some block $b_2$ for view $v_2$ that does not have  $b_1$ as an ancestor receives an M-notarisation. From (X1) it follows that $v_2>v_1$. By our choice of $v_2$, and since correct processors will not vote for blocks until they see an M-notarisation for the parent, it follows that the parent of $b_2$ must be for a view $<v_1$. This gives a contradiction, because correct processors would not vote for $b_2$ in view $v_2$ without seeing a nullification for view $v_1$. Such a nullification cannot exist, by $(X2)$.

\vspace{0.2cm}
\noindent \textbf{Ensuring liveness}. To ensure liveness, we must first guarantee that if all correct processors enter a view, then they all eventually leave the view. To this end we allow that, while correct processors will not vote in any view $v$ after sending a nullify$(v)$ message, they \emph{may} send nullify$(v)$ messages after voting. More precisely, correct $p_i$ will send a  $\text{nullify}(v)$ message while in view $v$ if either:
\begin{enumerate}
\item[(a)] Their `timer' for view $v$ expires (time $2\Delta$ passes after entering the view) \emph{before voting}, or;
\item[(b)] They receive messages from $2f+1$ processors, each of which is either:
\begin{itemize}
\item  A nullify$(v)$ message, or;
\item  A vote for a view $v$ block different than a view $v$ block that $p_i$ has already voted for.
\end{itemize}
\end{enumerate}
Combined with the fact that correct processors will forward on nullifications and notarisations upon receiving them, the conditions above achieve two things. First, they suffice to ensure that (X2) is satisfied. If $b$ for view $v$ receives an L-notarisation, then let $P$ be the correct processors that vote for $b$, let $P'=\Pi\setminus P$, and note that $|P'|\leq 2f$. No processor in $P$  can send a nullify$(v)$ message via (a) or vote for a view $v$ block other than $b$. It follows that no processor in $P$ can be caused to send a nullify$(v)$ message via (b). So, at most $2f$ processors
 can send nullify$(v)$ messages.

 \vspace{0.2cm}
The conditions above also suffice to ensure that if all correct processors enter a view $v$, then they all eventually leave the view. Towards a contradiction, suppose all correct processors enter view $v$, but it is not the case that they all leave the view. Since correct processors forward nullifications and notarisations upon receiving them, this means that no correct processor leaves view $v$. Each correct processor eventually receives, from at least $n-f$ processors, either a vote for some block for view $v$ or a nullify$(v)$ message. If any correct processor receives an M-notarisation for the block they voted for, then we reach an immediate contradiction. So, suppose otherwise. If $p_i$ is a  correct processor that votes for a view $v$ block, it follows that $p_i$  receives messages from at least $(n-f)-(2f)=n-3f\geq 2f+1$ processors, each of which is either a nullify$(v)$ message or a vote for a view $v$ block different than the view $v$ block that $p_i$ votes for. So, $p_i$ sends a nullify$(v)$ message via (b). Any correct processor that does not vote for a view $v$ block also sends a nullify$(v)$ message, so all correct processors send nullify$(v)$ messages, giving the required contradiction.

\vspace{0.2cm}
\noindent \emph{Adding an extra instruction to send votes}. Once we have ensured that correct processors progress through the views, establishing liveness amounts to showing that each correct leader after GST finalises a new block. This now follows quite easily, using the reasoning outlined in (i) and (ii) above, where we explained that (after GST)  processors are guaranteed to receive all the messages they need to verify the validity of a block proposed by a correct leader. However, a subtle issue does require us to stipulate one further context in which correct processors should vote for a block. Suppose view $v$ begins after GST and that the leader for view $v$ is correct. Since a correct processor $p_i$  proceeds to view $v+1$ immediately upon seeing an M-notarisation for a view $v$ block $b$, and since it is possible that this is received \emph{before} $p_i$ receives all nullifications required to verify that $p_i$ should vote for $b$, the possibility apparently remains that correct processors will proceed to view $v+1$ without $b$ receiving an L-notarisation, i.e., we do not yet have any guarantee that all correct processors will vote for $b$.  To avoid this, we stipulate that, if $p_i$ receives an M-notarisation for $b$ and has not previously sent a nullify$(v)$ message or voted during view $v$, then $p_i$ should vote for $b$ before entering view $v+1$. In this case, the fact that $b$ has already received an M-notarisation means that some correct processors have already voted for $b$, so it is safe for $p_i$ to do the same.
The formal proof appears in Section \ref{analysis}.

\vspace{0.4cm}
\noindent \textbf{Intuition summary (informal)}:
\begin{itemize}
\item  In each view, the leader proposes a block and all processors then vote (one round of voting). Correct processors vote for at most one block in each view, which ensures (X1).
\item An L-notarisation suffices for finalization, while an M-notarisation or a nullification suffices to move to the next view.
\item  Processors only vote for the block $b$ for view $v$, with parent $b'$ for view $v'$,  if they have seen a notarisation for $b'$ and nullifications for all views in $(v',v)$.  Using (X1) and (X2), this suffices for safety.
\item To ensure progression through the views, processors send nullify$(v)$ messages upon timing out ($2\Delta$ after entering the view) before voting, or upon receiving proof that no block for view $v$ will receive an L-notarisation. This ensures (X2) is satisfied.
\item Once we are given that processors progress through views, liveness follows from the fact that each correct  leader after GST will finalise a new block: correct processors will vote for the leader's proposal because the leader themself will have sent all messages required to verify the validity of the block. Since correct processors also vote for the leader's block in view $v$ upon seeing an M-notarisation for it (if they have not already voted in view $v$, or sent a nullify$(v)$ message), this ensures all correct processors vote for the block before leaving view $v$, and it receives an L-notarisation.

\end{itemize}

\section{The formal specification} \label{formal}
We initially give a specification aimed at simplicity. In Section \ref{opt}, we will describe optimisations, such as the use of threshold signatures to reduce communication complexity.
In what follows, we suppose that all messages are signed by the sender. We say `disseminate' to mean `send to all processors'. When a correct processor is instructed to send a message to itself, it regards that message as immediately
received.  For the sake of simplicity, we also initially assume (without explicit mention in the pseudocode) that correct processors automatically send new transactions to all others upon first receiving them - we will revisit this assumption in Section \ref{opt}. The pseudocode uses a number of message types, local
variables, functions and procedures, detailed below.

\vspace{0.2cm}
\noindent \textbf{The function} $\mathtt{lead}(v)$. The value $\mathtt{lead}(v)$ specifies the leader for view $v$. To be concrete, we set\footnote{Of course, leaders can also be randomly selected, if given an appropriate source of common randomness.} $\mathtt{lead}(v):= p_{j+1}$, where $j= v \text{ mod }n$.

\vspace{0.2cm}
\noindent \textbf{Blocks}. The \emph{genesis block} is the tuple $b_{\text{gen}}:=(0,\lambda,\lambda)$, where $\lambda$ denotes the empty sequence (of length 0).  A block other than the genesis block is a tuple $b=(v,\text{Tr}, h)$, signed by $\mathtt{lead}(v)$, where:
\begin{itemize}
\item $v\in \mathbb{N}_{\geq 1}$ (thought of as the view corresponding to $b$);
\item $\text{Tr}$ is a sequence of distinct transactions;
\item $h$ is a hash value (used to specify $b$'s parent).
\end{itemize}
We also write $b.\text{view}$, $b.\text{Tr}$ and $b.\text{par}$ to denote the corresponding entries of $b$. If $b.\text{view}=v$, we also refer to $b$ as a `view $v$ block'.

\vspace{0.2cm}
\noindent \textbf{Votes}. A \emph{vote} for the block $b$ is a message of the form $(\text{vote},b)$.

\vspace{0.2cm}
\noindent \textbf{M-notarisations}. An M-\emph{notarisation} for the block $b$ is a set of $2f+1$ votes for $b$, each signed by a different processor. (By an M-\emph{notarisation}, we mean an M-notarisation for some block.)

\vspace{0.2cm}
\noindent \textbf{L-notarisations}. An \emph{L-notarisation} for the block $b$ is a set of $n-f$ votes for $b$, each signed by a different processor. We note that, since an L-notarisation is a larger set of votes than an M-notarisation, if $p_i$ has received an L-notarisation for $b$, then it has necessarily received an M-notarisation for $b$.

\vspace{0.2cm}
\noindent \textbf{Nullify($v$) messages}. For $v\in \mathbb{N}_{\geq 1}$, a nullify$(v)$ message is of the form $(\text{nullify},v)$.

\vspace{0.2cm}
\noindent \textbf{Nullifications}. A \emph{nullification} for view $v$ is a set of $2f+1$ nullify$(v)$ messages, each signed by a different processor. (By a \emph{nullification}, we mean a nullification for some view.)

\vspace{0.2cm}
\noindent \textbf{The local variable} $\mathtt{S}$. This variable is maintained locally by each processor $p_i$ and stores all messages received. It is considered to be automatically updated, i.e., we do not give explicit instructions in the pseudocode updating $\mathtt{S}$. We say a set of messages $M'$ is \emph{contained in} $\mathtt{S}$ if $M'\subseteq \mathtt{S}$. We also regard $\mathtt{S}$ as containing a block $b$ whenever $\mathtt{S}$ contains any message (tuple) with $b$ as one of its entries. Initially, $\mathtt{S}$ contains only $b_{\text{gen}}$  and an L-notarisation (and an M-notarisation) for $b_{\text{gen}}$.

\vspace{0.2cm}
\noindent \textbf{The local variable} $\mathtt{v}$. Initially set to 1, this variable specifies the present view of a processor.

\vspace{0.2cm}
\noindent \textbf{The local timer} $\mathtt{T}$. Each processor $p_i$ maintains a local timer $\mathtt{T}$, which is initially set to 0 and increments in real-time. (Processors will be explicitly instructed to reset their timer to 0 upon entering a new view.)

\vspace{0.2cm}
\noindent \textbf{The local variables} $\mathtt{nullified}$, $\mathtt{proposed}$, and $\mathtt{notarised}$. These are used by $p_i$ to record whether it has yet sent a nullify$(\mathtt{v})$ message, whether it has yet proposed a block for view $v$, and the block  it has voted for in the present view: $\mathtt{nullified}$ and $\mathtt{proposed}$ are initially set to false, while $\mathtt{notarised}$ is initially set to $\bot$ (a default value different than any block). These values will be explicitly reset upon entering a new view.

\vspace{0.2cm}
\noindent \textbf{The function} SelectParent$(\mathtt{S},\mathtt{v})$. This function is used by the leader of a view, $p_i$ say,  to select the parent block to build on. If $v'<\mathtt{v}$ is the greatest view\footnote{There must exist such a view, since $\mathtt{S}$ always contains an M-notarisation for the genesis block.} such that $\mathtt{S}$ contains an M-notarisation for some $b$ with $b.\text{view}=v'$, and if $b$ is the lexicographically least such block, the function outputs $b$.

\vspace{0.2cm}
\noindent \textbf{The procedure} ProposeChild$(b,v)$. This procedure is executed by the leader $p_i$ of view $v$ to determine a new block. To execute the procedure, $p_i$:
\begin{itemize}
\item  Forms a sequence of distinct transactions Tr, containing all transactions received by $p_i$ and not included in $b'.\text{Tr}$ for any $b'\in \mathtt{S}$ which is an ancestor of $b$, and;
\item Disseminates the block $(v,\text{Tr},H(b))$.
\end{itemize}

 \vspace{0.2cm}
\noindent \textbf{When} $\mathtt{S}$ \textbf{contains a valid proposal for view} $v$. This condition is satisfied when $\mathtt{S}$ contains:
\begin{enumerate}
\item[(i)] Precisely one block $b$ of the form  $b=(v,\text{Tr},h)$ signed by $\mathtt{lead}(v)$;
\item[(ii)] An M-notarisation for some $b'$ with $H(b')=h$, and with $b'.\text{view}=v'$ (say), and;
\item[(iii)] A nullification for each view in the open interval $(v',v)$.
\end{enumerate}
When (i)--(iii) are satisfied w.r.t.\ $b$, we say $\mathtt{S}$ contains a valid proposal $b$ for view $v$.

\vspace{0.2cm}
\noindent \textbf{New nullifications and notarisations}. Processors will be required to forward on all newly received nullifications and M-notarisations. To make this precise, we must specify what is to count as a `new' nullification/notarisation. At timeslot $t$, $p_i$ regards a nullification $N\subseteq \mathtt{S}$ for some view $v$ (not necessarily equal to $\mathtt{v}$) as \emph{new} if:
\begin{itemize}
\item $\mathtt{S}$ (as locally defined) did not contain a nullification for view $v$ at any smaller timeslot, and;
\item $N$ is lexicographically least amongst nullifications  for view $v$ contained in $\mathtt{S}$.
\end{itemize}
At timeslot $t$, $p_i$ regards an $M$-notarisation $Q\subseteq \mathtt{S}$ for block $b$ as \emph{new}\footnote{Since it is not necessary for liveness, our pseudocode does not require processors to forward L-notarisations. One could also require processors to forward L-notarisations, and the proofs of Section \ref{analysis} would go through unchanged.} if:
\begin{itemize}
\item $\mathtt{S}$ did not contain an M-notarisation for $b$ at any smaller timeslot, and;
\item $Q$ is lexicographically least amongst M-notarisations for $b$ contained in $\mathtt{S}$.
\end{itemize}

\vspace{0.2cm} For ease of reference, message types and local variables are displayed in Tables 1 and 2. The pseudocode appears in Algorithm 1.

\begin{table}[h!]
  \begin{center}
    \label{tab:table1}
    \begin{tabular}{l|l} 
      \textbf{Message type} & \textbf{Description} \\
      \hline
      $b_{\text{gen}}$ & The tuple $(0,\lambda,\lambda)$, where $\lambda$ is the empty string \\
      block $b\neq b_{\text{gen}}$  & A tuple $(v,\text{Tr},h)$, signed by $\mathtt{lead}(v)$ \\
      vote for $b$  & A message $(\text{vote},b)$\\
      nullify$(v)$ & A message of the form $(\text{nullify},v)$ \\
      nullification for $v$ & A set of $2f+1$ nullify$(v)$ messages, each signed by a different processor \\
      M-notarisation for $b$ & A set of $2f+1$ votes for $b$,  each signed by a different processor \\
      L-notarisation for $b$& A set of $n-f$ votes for $b$,  each signed by a different processor \\

    \end{tabular}
        \caption{Messages}

  \end{center}
\end{table}

\begin{table}[h!]
  \begin{center}
    \label{tab:table2}
    \begin{tabular}{l|l} 
      \textbf{Variable} & \textbf{Description} \\
      \hline
      $\mathtt{v}$ & Initially 1, specifies the present view \\
      $\mathtt{T}$ & Initially 0, a local timer reset upon entering each view \\
$\mathtt{nullified}$ & Initially false, specifies whether already sent nullify$(\mathtt{v})$ message \\
$\mathtt{proposed}$ & Initially false, specifies whether already proposed a block for view $v$ \\
$\mathtt{notarised}$ & Initially set to $\bot$, records block voted for in present view \\
$\mathtt{S}$ & Records all received messages, automatically updated \\
& Initially contains only $b_{\text{gen}}$ and M/L-notarisations for $b_{\text{gen}}$ \\

    \end{tabular}
        \caption{Local variables}

  \end{center}
\end{table}

 \begin{algorithm} \label{alg1}
\caption{: the instructions for $p_i$}
\begin{algorithmic}[1]

\State At every timeslot $t$:

   \State  \hspace{0.3cm} Disseminate new nullifications in $\mathtt{S}$; \label{forwardN} \Comment `new' as defined in Section \ref{formal}

      \State  \hspace{0.3cm} Disseminate new M-notarisations in $\mathtt{S}$; \label{forwardML}


 \State

 \State \hspace{0.3cm} If $p_i=\mathtt{lead}(\mathtt{v})$ and $\mathtt{proposed}=$ false:

%
     \State \hspace{0.6cm} $\text{ProposeChild}(\text{SelectParent}(\mathtt{S},\mathtt{v}),\mathtt{v})$; \label{sendblock}   \Comment Send out a new block

     \State \hspace{0.6cm} Set $\mathtt{proposed}:=$ true; 

     \State

      \State \hspace{0.3cm} If $\mathtt{S}$ contains a valid proposal $b$ for view $\mathtt{v}$:  \Comment As defined in Section \ref{formal}
      \State \hspace{0.6cm}  If $\mathtt{notarised}=\bot$ and $\mathtt{nullified}=$ false:  \label{votecheck}
      \State \hspace{0.9cm} Set $\mathtt{notarised}:=b$ and disseminate $(\text{vote},b)$; \label{vote1} \Comment Send vote

      \State
        \State \hspace{0.3cm} If $\mathtt{T}=2\Delta$, $\mathtt{nullified}=$ false and $\mathtt{notarised}=\bot$:  \label{timeout1}
        \State \hspace{0.6cm} Set $\mathtt{nullified}:=$true and disseminate $(\text{nullify},\mathtt{v})$; \label{timeout2} \Comment Send nullify$(\mathtt{v})$

        \State

        \State  \hspace{0.3cm} If  $\mathtt{S}$ contains a nullification for $\mathtt{v}$:
        \State  \hspace{0.6cm}  Set $\mathtt{v}:=\mathtt{v}+1$, $\mathtt{nullified}:=$ false, $\mathtt{proposed}:=$ false, $\mathtt{notarised}:=\bot$, $\mathtt{T}:=0$; \label{newview1} 
        \State \Comment Go to next view
         \State  \hspace{0.3cm} If $\mathtt{S}$ contains an  M-notarisation for some $b$ with $b.\text{view}=\mathtt{v}$:
          \State  \hspace{0.6cm} If  $\mathtt{notarised} =\bot$ and $\mathtt{nullified}=$ false, disseminate $(\text{vote},b)$;  \label{vote2} \Comment Send vote
         \State  \hspace{0.6cm}  Set $\mathtt{v}:=\mathtt{v}+1$, $\mathtt{nullified}:=$ false, $\mathtt{proposed}:=$ false, $\mathtt{notarised}:=\bot$, $\mathtt{T}:=0$; \label{newview2}
         \State  \Comment Go to next view

      \State

      \State  \hspace{0.3cm} If $\mathtt{nullified}=$ false, $\mathtt{notarised}\neq \bot$ and $\mathtt{S}$ contains $\geq 2f+1$ messages, each signed by a \label{beginN}
      \State  \hspace{0.3cm}  different processor, and each either:
      \State  \hspace{0.4cm} (i) A message $(\text{nullify},\mathtt{v})$, or;
      \State  \hspace{0.4cm} (ii) Of the form $(\text{vote},b)$ for some $b$ s.t.\ $b.\text{view}=\mathtt{v}$ and $\mathtt{notarised}\neq b$:  \label{endN}

         \State  \hspace{0.7cm} Set $\mathtt{nullified}:=$ true and disseminate $(\text{nullify},\mathtt{v})$; \label{sendN}

          \State \Comment Send nullify$(\mathtt{v})$ message upon proof of no progress for $\mathtt{v}$

          \State

          \State \hspace{0.3cm} If $\mathtt{S}$ contains a new L-notarisation for any block $b$:
           \State \hspace{0.6cm} Finalise $b$; \Comment Finalisation as specified in Section \ref{setup}

\end{algorithmic}
\end{algorithm}

\section{Protocol Analysis} \label{analysis}

\subsection{Consistency}  \label{consec}
We say block $b$ \emph{receives} an M-notarisation if $b=b_{\text{gen}}$ or at least $2f+1$ processors send votes for $b$. Similarly,  we say  $b$ receives an L-notarisation if $b=b_{\text{gen}}$ or at least $n-f$ processors send votes for $b$. View $v$ receives a nullification if at least $2f+1$ processors send nullify$(v)$ messages.

\begin{lemma}[One vote per view] \label{singlevote}
Correct processors vote for at most one block in each view, i.e., if $p_i$ is correct then,  for each $v\in \mathbb{N}_{\geq 1}$, there exists at most one $b$ with $b.\text{view}=v$ such that $p_i$ sends a message $(\text{vote},b)$.
\end{lemma}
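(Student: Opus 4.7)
The plan is a direct invariant argument over the pseudocode. The only two instructions at which correct $p_i$ emits a $(\text{vote},b)$ message are line \ref{vote1}, inside the valid-proposal handler, and line \ref{vote2}, inside the M-notarization handler; in both cases $b.\text{view}=\mathtt{v}$ at the moment the vote is sent. The variable $\mathtt{v}$ is modified only by the increments at lines \ref{newview1} and \ref{newview2}, so it is monotonically non-decreasing, and consequently $p_i$ occupies each view $v\in \mathbb{N}_{\geq 1}$ during a (possibly empty) contiguous interval of timeslots $I_v$. The lemma is therefore equivalent to the statement that at most one vote is emitted during $I_v$.

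To prove this, I would track the auxiliary variable $\mathtt{notarized}$. It is initialised to $\bot$, is reset to $\bot$ exactly at the two view-increments (lines \ref{newview1} and \ref{newview2}), and is set to a non-$\bot$ value only at line \ref{vote1}. In particular $\mathtt{notarized}=\bot$ at the first timeslot of $I_v$, and once it is assigned a non-$\bot$ value inside $I_v$ it remains non-$\bot$ throughout the remainder of $I_v$. Because line \ref{vote1} guards on $\mathtt{notarized}=\bot$ and simultaneously overwrites it, line \ref{vote1} fires at most once in $I_v$. Because line \ref{vote2} likewise guards on $\mathtt{notarized}=\bot$ and is immediately followed by line \ref{newview2}, which terminates $I_v$ by incrementing $\mathtt{v}$, line \ref{vote2} also fires at most once in $I_v$.

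It remains to rule out that both lines fire within the same interval $I_v$. Line \ref{vote1} precedes line \ref{vote2} in the written order of the pseudocode, which is executed sequentially each timeslot. If line \ref{vote1} fires at some $t\in I_v$, then $\mathtt{notarized}\neq \bot$ at line \ref{vote2} in that timeslot and, by the previous paragraph, at every subsequent timeslot of $I_v$, so line \ref{vote2} is permanently blocked; conversely, if line \ref{vote2} fires first at some $t\in I_v$, then line \ref{newview2} terminates $I_v$ within timeslot $t$, leaving no later opportunity for line \ref{vote1}. In either case at most one vote for a view-$v$ block is emitted, establishing the lemma. The only subtlety is the within-timeslot bookkeeping between lines \ref{vote1} and \ref{vote2}, but this is resolved by the sequential-execution convention of the pseudocode, so I do not anticipate a significant obstacle.
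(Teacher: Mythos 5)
Your proof is correct and follows essentially the same approach as the paper's: both use the invariant that $\mathtt{notarized}$ is $\bot$ at the start of each view, that each voting line is guarded on $\mathtt{notarized}=\bot$, and that line \ref{vote1} overwrites $\mathtt{notarized}$ while line \ref{vote2} is immediately followed by a view increment that ends the interval. Your version merely spells out the contiguous-interval framing and the within-timeslot sequential-execution subtlety more explicitly than the paper does.
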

\begin{proof}
Recall that $\bot$ is a default value, different than any block. Each correct processor's local value $\mathtt{notarised}$ is initially set to $\bot$, and is also set to $\bot$ upon entering  any view (lines \ref{newview1} and \ref{newview2}). A correct processor $p_i$  will only vote for a block if $\mathtt{notarised}=\bot$ (lines \ref{votecheck} and \ref{vote2}). The claim of the lemma holds because, upon voting for any block $b$, $p_i$  either sets $\mathtt{notarised}:=b$ (line \ref{vote1}) and then does not redefine this value until entering the next view, or else immediately enters the next view (lines \ref{vote2} and \ref{newview2}).
\end{proof}

\begin{lemma}[(X1) is satisfied] \label{X1}  If  $b$ receives an L-notarisation, then no block $b'\neq b$ with $b'.\text{view}=b.\text{view}$ receives an M-notarisation.
\end{lemma}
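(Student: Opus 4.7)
The plan is to run the standard quorum-intersection argument sketched as (X1) in Section \ref{intu}, now formalized using Lemma \ref{singlevote}. I would argue by contradiction: assume $b$ receives an L-notarization and some $b' \neq b$ with $b'.\text{view} = b.\text{view} = v$ receives an M-notarization.

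Next, I would unpack the definitions from Section \ref{analysis}. Since $b \neq b'$, at most one of $b,b'$ equals $b_{\text{gen}}$; in fact neither can be $b_{\text{gen}}$, because the genesis block has view $0$, whereas any block receiving a non-trivial M- or L-notarization via actual votes would have a matching view, and matching $b.\text{view} = b'.\text{view}$ forces both to be $b_{\text{gen}}$ if one is — a contradiction with $b \neq b'$. (Alternatively I would just observe that $b_{\text{gen}}$ is the unique view-$0$ block by definition.) Hence by the definition of ``receives'', there is a set $P$ of at least $n-f$ processors that send $(\text{vote},b)$ and a set $P'$ of at least $2f+1$ processors that send $(\text{vote},b')$.

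Then I would apply inclusion-exclusion: since $|P|+|P'| \geq (n-f)+(2f+1) = n+f+1$ and both are subsets of $\Pi$ with $|\Pi|=n$, we get $|P \cap P'| \geq f+1$. Because at most $f$ processors are faulty, $P \cap P'$ contains at least one correct processor $p_i$. But then $p_i$ has sent both $(\text{vote},b)$ and $(\text{vote},b')$, two distinct votes in the same view $v$, contradicting Lemma \ref{singlevote}.

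I do not expect any real obstacle here; the only minor care needed is to handle the genesis-block edge case cleanly (so that ``receives'' really does deliver actual distinct vote messages from $n-f$ and $2f+1$ processors). The bulk of the work, namely ruling out double voting by correct processors, has already been done in Lemma \ref{singlevote}, so this lemma reduces to the one-line counting inequality $(n-f)+(2f+1)-n = f+1 > f$.
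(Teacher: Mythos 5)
Your proof is correct and takes essentially the same route as the paper: the quorum-intersection count $(n-f)+(2f+1)-n = f+1 > f$ combined with Lemma~\ref{singlevote}. The only addition is the explicit handling of the genesis edge case, which the paper leaves implicit but which your observation (that the genesis block is the unique view-$0$ block) disposes of cleanly.
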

\begin{proof}
Given Lemma \ref{singlevote}, this now follows as in  Section \ref{intu}. Towards a contradiction, suppose that $b$  receives an L-notarisation and that  $b'\neq b$ with $b'.\text{view}=b.\text{view}$  receives an M-notarisation. Let $P$ be the set of processors that contribute to the L-notarisation for $b$, and let $P'$ be the set of processors that vote for $b'$.  Then  $|P\cap P'| \geq (n-f)+(2f+1)-n = f+1$. So, $P\cap P'$ contains a correct processor, which contradicts Lemma \ref{singlevote}.
\end{proof}

\begin{lemma}[(X2) is satisfied]  \label{X2} If  $b$ receives an L-notarisation and $v=b.\text{view}$, then view $v$ does not receive a nullification.
\end{lemma}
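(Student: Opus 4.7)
The plan is to combine a simple counting argument with a minimality argument over the time of the first offending nullify message. First I would let $P$ denote the set of correct processors that send $(\text{vote},b)$ in view $v$. Because $b$ receives an L-notarization, at least $n-f$ processors vote for $b$, of which at most $f$ are Byzantine, so $|P|\geq n-2f\geq 3f+1$, which gives $|\Pi\setminus P|\leq 2f$. Therefore it suffices to prove that no processor in $P$ ever sends a nullify$(v)$ message: if so, only the at-most-$2f$ processors in $\Pi\setminus P$ can do so, which is strictly below the nullification threshold $2f+1$.

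To prove no $p_j\in P$ sends nullify$(v)$, I would argue by contradiction and pick the earliest timeslot $t^*$ at which some $p_j\in P$ does so. The pseudocode offers exactly two routes that emit a nullify message --- the timer-expiry rule (lines \ref{timeout1}--\ref{timeout2}) and the no-progress rule (lines \ref{beginN}--\ref{sendN}) --- and I would handle each in turn. The timer-expiry case is easy: its guard demands $\mathtt{notarized}=\bot$ at $t^*$ and immediately sets $\mathtt{nullified}:=\text{true}$, which then blocks both voting rules (lines \ref{vote1} and \ref{vote2}) for the remainder of view $v$, so $p_j$ could never have voted for $b$, contradicting $p_j\in P$.

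The main obstacle is the no-progress route, which is where the minimality of $t^*$ does the real work. At $t^*$, since $p_j\in P$ voted for $b$ in view $v$ and is still in view $v$ (otherwise $\mathtt{v}$ would have advanced past $v$ and no nullify$(v)$ could be sent), it must have voted via line \ref{vote1} and so has $\mathtt{notarized}=b$. The guard at lines \ref{beginN}--\ref{endN} then demands $2f+1$ distinct processors to have contributed a message that is either a nullify$(v)$ or a vote in view $v$ for a block other than $b$. Since $|\Pi\setminus P|\leq 2f$, at least one contributor $p_k$ must lie in $P$; by Lemma~\ref{singlevote}, $p_k$'s only view-$v$ vote was for $b$, so $p_k$'s contribution cannot be a vote for a different view-$v$ block and must therefore be a nullify$(v)$ message already in $p_j$'s $\mathtt{S}$ by $t^*$. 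This means $p_k\in P$ sent nullify$(v)$ strictly earlier than $t^*$, contradicting the choice of $p_j$. Combining the two cases, no processor in $P$ ever nullifies, so at most $2f<2f+1$ processors send nullify$(v)$, completing the proof.
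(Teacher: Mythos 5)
Your proof is correct and follows essentially the same route as the paper's: both define $P$ as the correct processors that vote for $b$, observe $|\Pi\setminus P|\leq 2f$, take the earliest time at which some member of $P$ sends $\text{nullify}(v)$, rule out the timeout route because it is incompatible with ever voting for $b$, and then use Lemma~\ref{singlevote} plus minimality to show the no-progress guard cannot collect $2f+1$ contributors outside $P$. You spell out a couple of steps the paper leaves implicit (why the timeout route is incompatible with membership in $P$, and why $\mathtt{notarized}=b$ at the critical time), which is a welcome refinement rather than a difference in approach.
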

\begin{proof}
Towards a contradiction, suppose  $b$ receives an L-notarisation, $v=b.\text{view}$,  and view $v$ receives a nullification. Let $P$ be the correct processors that vote for $b$, let $P'=\Pi\setminus P$, and note that $|P'|\leq 2f$. Since view $v$ receives a nullification, it follows that some processor in $P$ must send a nullify$(v)$ message. So, let $t$ be the first timeslot at which some processor $p_i\in P$ sends such a message. Since $p_i$ cannot send a nullify$(v)$ message upon timeout (lines \ref{timeout1}-\ref{timeout2}), $p_i$ must send the nullify$(v)$ message at $t$ because the conditions of lines  \ref{beginN}-\ref{endN} hold for $p_i$ at $t$, i.e., $p_i$ must have received $\geq 2f+1$ messages, each signed by a   different processor, and each of the form:
\begin{enumerate}
\item[(i)]  $(\text{nullify},v)$, or;
\item[(ii)] $(\text{vote},b')$ for some $b'\neq b$ with  $b'.\text{view}=v$.
\end{enumerate}
By Lemma \ref{singlevote}, no processor in $P$ sends a message of form (ii). By our choice of $t$, no processor in $P$ sends  a message of form (i) prior to $t$. Combined with the fact that $|P'|\leq 2f$, this gives the required contradiction.
\end{proof}

\begin{lemma}[Consistency] The protocol satisfies Consistency.
\end{lemma}
\begin{proof}
Towards a contradiction, suppose that two inconsistent blocks, $b$ and $b'$ say, both receive L-notarisations. Without loss of generality suppose $b.\text{view}\leq b'.\text{view}$. Set $b_1:=b$ and $v_1:=b_1.\text{view}$. Then there is a \emph{least}  $v_2\geq v_1$ such that some block $b_2$ satisfies:
\begin{enumerate}
\item $b_2.\text{view}=v_2$;
\item   $b_1$ is not an ancestor of $b_2$, and;
\item $b_2$  receives an M-notarisation.
\end{enumerate}
 From Lemma \ref{X1},  it follows that $v_2>v_1$. According to clause  (ii) from the definition of when $\mathtt{S}$ contains a valid proposal for view $v_2$, correct processors will not vote for $b_2$ in line \ref{vote1} until they receive an M-notarisation for its parent, $b_0$ say.  Correct processors will not vote for $b_2$ in line \ref{vote2} until $b_2$ has already received an M-notarisation, meaning that at least $f+1$ correct processors must first vote for $b_2$ via line \ref{vote1}, and $b_0$ must receive an M-notarisation. By our choice of $v_2$, it follows that  $b_0.\text{view}<v_1$. This gives a contradiction, because, by clause (iii) from the definition of a valid proposal for view $v_2$,  correct processors would not vote for $b_2$  in line \ref{vote1} without receiving a nullification for view $v_1$.  By Lemma \ref{X2}, such a nullification cannot exist. So,  block $b_2$  cannot receive an M-notarisation (and no correct processor votes for $b_2$ via either line \ref{vote1} or \ref{vote2}).
\end{proof}

\subsection{Liveness} \label{livesec}

\begin{lemma}[Progression through views] Every correct processor enters every view $v\in \mathbb{N}_{\geq 1}$.
\end{lemma}
\begin{proof}  Towards a contradiction, suppose that some correct processor $p_i$ enters view $v$, but never enters view $v+1$.  Note that correct processors only leave any view $v'$ upon receiving either a nullification for the view, or else an M-notarisation for some view $v'$ block. Since correct processors forward new nullifications and notarisations upon receiving them (lines \ref{forwardN} and \ref{forwardML}), the fact that $p_i$ enters view $v$ but does not leave it  means that:
\begin{itemize}
\item All correct processors enter view $v$;
\item No correct processor leaves view $v$.
\end{itemize}

 Each correct processor eventually receives, from at least $n-f$ processors, either a vote for some view $v$ block, or a nullify$(v)$ message. If any correct processor receives an M-notarisation for a view $v$ block, then we reach an immediate contradiction. So, suppose otherwise. If $p_j$ is a  correct processor that votes for a view $v$ block $b$, it follows that $p_j$  receives messages from at least $(n-f)-(2f)=n-3f\geq 2f+1$ processors, each of which is either:
\begin{enumerate}
\item[(i)]  A nullify$(v)$ message, or;
\item[(ii)]  A vote for a view $v$ block different than $b$.
\end{enumerate}
 So, the conditions of lines \ref{beginN}-\ref{endN} are eventually satisfied, meaning that $p_j$ sends a nullify$(v)$ message (line \ref{sendN}). Any correct processor that does not vote for a view $v$ block also sends a nullify$(v)$ message, so all correct processors send nullify$(v)$ messages. All correct processors therefore receive a nullification for view $v$ and leave the view (line \ref{newview1}), giving the required contradiction.
\end{proof}

\begin{lemma}[Correct leaders finalise blocks] \label{L1} If $p_i=\mathtt{lead}(v)$ is correct, and if the first correct processor to enter view $v$ does so after GST, then $p_i$ disseminates a block  and that block  receives an L-notarisation.
\end{lemma}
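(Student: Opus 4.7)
The plan is to pin down the timeslot $t_0$ at which $p_i$ enters view $v$ and issues its proposal, and then to show that by $t_0+\Delta$ every correct processor has in $\mathtt{S}$ a valid proposal for view $v$ witnessed by $p_i$'s block and will cast a vote for it (producing at least $n-f$ votes, hence an L-notarization). By the progression lemma, $p_i$ enters view $v$ at some timeslot $t_0$; let $t^*\geq \text{GST}$ be the earliest timeslot at which any correct processor enters view $v$. The nullification or M-notarization that triggers that entry is forwarded by the first correct entrant on lines \ref{forwardN}--\ref{forwardML}, so every correct processor, and in particular $p_i$, enters view $v$ by $t^*+\Delta$. Hence $t_0\leq t^*+\Delta$, and every correct $p_j$ enters view $v$ at some $t_0^{(j)}$ with $|t_0-t_0^{(j)}|\leq\Delta$.

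Next I would describe the proposal and its propagation. Because $p_i$ has traversed every view in $[1,v{-}1]$, its $\mathtt{S}$ at $t_0$ contains, for each such view, either a nullification or an M-notarization. Let $v'<v$ be the largest view for which $p_i$ holds an M-notarization for some block $b$; SelectParent returns $b$, and by line \ref{sendblock} $p_i$ sends $b^*=(v,\text{Tr},H(b))$ to all processors at $t_0$. Meanwhile, $p_i$ has already forwarded (at some timeslots $\leq t_0$) both the M-notarization for $b$ and a nullification for every view in $(v',v)$. Since $t_0\geq \text{GST}$, partial synchrony guarantees that by $t_0+\Delta$ every correct processor has $b^*$, the M-notarization for $b$, and all the required nullifications in $\mathtt{S}$. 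Because $p_i$ is correct, $b^*$ is the only view-$v$ block signed by $\mathtt{lead}(v)$ in any correct $\mathtt{S}$, so clauses (i)--(iii) of the valid-proposal definition are met w.r.t.\ $b^*$ at every correct processor by $t_0+\Delta$.

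The main obstacle is ensuring that each correct $p_j$ has not yet nullified view $v$ nor voted for a different block by the timeslot in which it first sees this valid proposal, so that the vote on lines \ref{votecheck}--\ref{vote1} actually fires. The timer-based nullify on lines \ref{timeout1}--\ref{timeout2} cannot have fired first, since $p_j$'s local timer reaches $2\Delta$ only at $t_0^{(j)}+2\Delta\geq t_0+\Delta$, and within a single timeslot the vote check is evaluated before the timeout check. No correct processor can have voted for any $b'\neq b^*$ via line \ref{vote1}, because the valid-proposal definition requires a view-$v$ block signed by $\mathtt{lead}(v)=p_i$ and no such $b'$ exists. The remaining dangers are the message-based nullify (lines \ref{beginN}--\ref{sendN}) and a vote via line \ref{vote2} on an M-notarization for some $b'\neq b^*$; I would handle both by considering the earliest timeslot at which any correct processor sends a nullify$(v)$ message or votes for some $b'\neq b^*$, and observing that $\leq f$ Byzantine contributions are themselves insufficient to trigger either event without a prior correct deviation of the same sort, which is exactly what we have already excluded. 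Consequently every correct $p_j$ executes line \ref{vote1} (or line \ref{vote2}) to vote for $b^*$, yielding at least $n-f$ votes and hence an L-notarization.
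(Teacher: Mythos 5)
Your proof is correct and follows essentially the same plan as the paper's: bound the time by which every correct processor holds a valid proposal for $p_i$'s block, show this precedes any correct nullify$(v)$, and conclude that all $\geq n-f$ correct processors vote for it. The one place you depart is in being considerably more explicit than the paper about why no correct processor nullifies or votes for a different block first: you isolate the line-ordering tie-break against the $2\Delta$ timeout, you note that line~\ref{vote1} can only fire on $p_i$'s block since $p_i$ is the unique signer, and you sketch an earliest-deviation induction to rule out the message-triggered nullify of lines~\ref{beginN}--\ref{sendN} and a line~\ref{vote2} vote on a spurious M-notarization (each would need $f+1$ prior correct deviations). The paper compresses all of this into the single assertion that all correct processors ``vote for $b$ (by either line~\ref{vote1} or \ref{vote2}) before any correct processor sends a nullify$(v)$ message,'' flagging only the role of line~\ref{vote2} in a footnote; your version fills in the bookkeeping that the paper leaves to the reader. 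To make the sketch fully rigorous you would want to run the earliest-deviation argument over a single well-ordered set of events (correct nullify$(v)$ sends and correct votes for $b'\neq b^*$) and observe that in the partial-synchrony model messages arrive strictly after they are sent, so the $f+1$ supporting correct messages genuinely predate the candidate earliest event.
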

\begin{proof} Suppose $p_i=\mathtt{lead}(v)$ is correct and that the first correct processor $p_j$ to enter view $v$ does so at timeslot $t\geq\text{GST}$. If $v>1$, processor $p_j$ enters view $v$ upon receiving either a nullification for view $v-1$, or else an M-notarisation for some view $v-1$ block. Since $p_j$ forwards on all new notarisations and nullifications that it receives (lines \ref{forwardN} and \ref{forwardML}), it follows that all correct processors enter view $v$ by $t+\Delta$ (note that this also holds if  $v=1$). Processor $p_i$ therefore disseminates a new block $b$  by $t+\Delta$, which is received by all processors by $t+2\Delta$. Let $b'$ be the parent of $b$ and suppose $b'.\text{view}=v'$.  Then $p_i$  receives an M-notarisation for $b'$ by $t+\Delta$. Since $p_i$ forwards on all new notarisations that it receives (line \ref{forwardML}), all correct processors receive an M-notarisation for $b'$ by $t+2\Delta$. Since $p_i$ has entered view $v$, it must also have received nullifications for all views in the open interval $(v',v)$ by $t+\Delta$, and all correct processors  receive these by $t+2\Delta$. All correct processors therefore vote for $b$ (by either line \ref{vote1} or \ref{vote2})\footnote{The point of line \ref{vote2} is to ensure this part of the proof goes through. As noted in Section \ref{intu}, without it, there is the possibility that correct processors move to the next view upon seeing an M-notarisation, before they are able to vote via line \ref{vote1}, thereby failing to guarantee an L-notarisation.} before any correct processor sends a nullify$(v)$ message. The block $b$ therefore receives an L-notarisation, as claimed.
\end{proof}

\begin{lemma}[Liveness] \label{L2}The protocol satisfies Liveness.
\end{lemma}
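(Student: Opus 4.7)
The plan is to show that every transaction tr received by a correct processor $p_i$ eventually lands in the log of every correct processor $p_j$. The key observation is that $p_i$ itself serves as the leader in infinitely many views (since $\text{lead}(v)=p_{v \bmod n}$), so I will arrange for $p_i$ to propose a block containing tr in a view after GST, and then invoke Lemma~\ref{L1} to force its finalisation by every correct processor.

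First I would identify a view $v^*$ with $\text{lead}(v^*)=p_i$ such that the first correct processor to enter $v^*$ does so at a timeslot $t^* \geq \max(\text{GST},t_0)$, where $t_0$ is the timeslot at which $p_i$ receives tr. Existence of such a $v^*$ follows from the Progression through views lemma together with the fact that view-entry times cannot remain bounded: in any finite interval only finitely many messages can be received and processed, so a correct processor can enter only finitely many views in any bounded time; yet Progression guarantees it enters every view. Restricting to the arithmetic progression of views led by $p_i$ then yields a suitable $v^*$.

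Next I would apply Lemma~\ref{L1} to $v^*$: $p_i$ sends some block $b^*$ and $b^*$ receives an L-notarization. By the ProposeChild specification, either $b^*.\text{Tr}$ explicitly contains tr, or tr already appears in $b'.\text{Tr}$ for some ancestor $b'$ of $b^*$; in either case $\text{tr} \in b^*.\text{Tr}^*$. Moreover, since all correct processors vote for $b^*$ (as shown inside the proof of L1) and broadcast their votes point-to-point, every correct $p_j$ eventually has at least $n-f$ votes for $b^*$ in $\mathtt{S}$, yielding an L-notarization. Each such $p_j$ then finalises $b^*$ and, by the definition of finalisation, sets $\text{log}_j$ to extend $b^*.\text{Tr}^*$, placing tr in its log as required.

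The main point requiring care is the existence argument for $v^*$: one must preclude view-entry times from piling up at some finite accumulation point despite Progression, which is handled by the finiteness of message processing per unit time. A secondary routine check is that each $p_j$ actually possesses all ancestors of $b^*$ at the moment of finalisation; this follows because every vote message carries the block being voted on, and the chain of M-notarizations up to $b^*$ (required by clauses (ii) and (iii) of the valid-proposal definition for anyone to vote up the chain) supplies all intermediate blocks to anyone who has been forwarding notarizations.
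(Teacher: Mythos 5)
Your proof is correct and follows essentially the same route as the paper's: pick a view led by $p_i$ whose first correct entry occurs after GST, invoke Lemma~\ref{L1} to get an L-notarization on $p_i$'s proposal, and observe that ancestors are delivered because each one receives an M-notarization (whose votes carry the blocks). You are somewhat more careful than the paper on two points—explicitly arguing that view-entry times are unbounded so a suitable $v^*$ with entry time $\geq\max(\mathrm{GST},t_0)$ exists, and insisting $t_0$ (when $p_i$ receives $\mathrm{tr}$) precede $p_i$'s proposal—but these are refinements of the same argument, not a different one.
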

\begin{proof}
Suppose correct $p_i$ receives the transaction $\text{tr}$. Let $v$ be a view with $\mathtt{lead}(v)=p_i$ and such that the first correct processor to enter view $v$ does so after GST.
By Lemma \ref{L1}, $p_i$ will send a block $b$ to all processors, and $b$ will receive an L-notarisation. From the definition of the ProposeChild procedure, it follows that $\text{tr}$ will be included in $b'.\text{Tr}$ for some ancestor $b'$ of $b$, and all correct processors will add tr to their log upon receiving all ancestors of $b$ (see the final paragraph of Section \ref{setup}). Correct processors only vote for blocks whose parent has already received an M-notarisation. All ancestors of $b'$ of $b$ must therefore receive M-notarisations, meaning that at least $f+1$ correct processors disseminate each such $b'$, and correct processors receive all ancestors of $b$.
\end{proof}

\subsection{Optimistic responsiveness} \label{resp}
Let $\delta\leq \Delta$ be the \emph{actual} (unknown) least upper bound on message delay after GST, and let $f_a\leq f$ be the actual (unknown) number of Byzantine processors.  If a transaction $\text{tr}$ is first received by a correct processor at time $t$, and is first finalised by all correct processors (i.e., appended to $\text{log}_i$ for every correct $p_i$) at time $t+\ell$, then we say \emph{latency for }$\text{tr}$ is $\ell$. We say a protocol is \emph{optimistically responsive} if latency is $O(f_a\Delta+\delta)$ for all transactions that are first received by any correct processor after GST: in particular, this means latency after GST is $O(\delta)$ when processors act correctly. In this section we show that Minimmit is optimistically responsive. In fact, it satisfies the stronger condition that latency after GST is $O(\delta)$ when \emph{leaders} act correctly.\footnote{Since the notion of leaders is protocol specific, we prefer to use the more general definition as stated, but the stronger result also follows directly from the proofs given in this section.}

\begin{lemma} \label{olem1} Suppose $\mathtt{lead}(v)$ is correct and that the first correct processor to enter view $v$ does so at $t\geq $GST. Then all correct processors leave view $v$ and finalise a view $v$ block by $t+O(\delta)$.
\end{lemma}
\begin{proof} Proving the claim of the lemma just involves reviewing the proof of Lemma \ref{L1} and observing that the leader's block will actually be finalised by all correct processors within time $O(\delta)$ of any correct processor entering the view.

As before, suppose first $p_i=\mathtt{lead}(v)$ is correct and that the first correct processor to enter view $v$ does so at timeslot $t\geq\text{GST}$. Since correct processors forward  on all new notarisations and nullifications that they receive, it follows that all correct processors enter view $v$ by $t+\delta$.  Processor $p_i$ therefore disseminates a new block $b$  by $t+\delta$, which is received by all processors by $t+2\delta$. Let $b'$ be the parent of $b$ and suppose $b'.\text{view}=v'$. Then, from the definition of the function SelectParent, it follows that $p_i$  receives an M-notarisation for $b'$ by $t+\delta$. Since $p_i$ forwards on all new notarisations that it receives (line \ref{forwardML}), all correct processors receive an M-notarisation for $b'$ by $t+2\delta$. Since $p_i$ has entered view $v$, it must also have received nullifications for all views in the open interval $(v',v)$ by $t+\delta$, and all correct processors  receive these by $t+2\delta$. All correct processors therefore vote for $b$ (by either line \ref{vote1} or \ref{vote2}) by $t+2\delta$, and before any correct processor sends a nullify$(v)$ message. All correct processors therefore receive $b$ together with an L-notarisation (and an M-notarisation) for $b$ by $t+3\delta$, and also leave view $v$ by this time.  This establishes the claim of the lemma.
\end{proof}

\begin{lemma} \label{olem2} Suppose the first correct processor to enter view $v$ does so at $t\geq $GST. Then, whether or not $\mathtt{lead}(v)$ is correct,  all correct processors leave view $v$ by $t+O(\Delta)$.
\end{lemma}
\begin{proof} Suppose the first correct processor to enter view $v$ does so at $t\geq $ GST. Towards a contradiction, suppose some correct processor does not leave view $v$ by $t+2\Delta+3\delta$. As before, it follows that all correct processors enter view $v$ by $t+\delta$. By timeslot $t+\delta +2\Delta$, all correct processors have either voted for some view $v$ block, or else sent a nullify$(v)$ message. If any correct processor receives an M-notarisation for a view $v$ block by $t+2\Delta +2\delta$, then it forwards it on to all processors. This means all correct processors leave the view by $t+2\Delta +3\delta$, giving an immediate contradiction. So, suppose otherwise. If $p_j$ is a  correct processor that votes for a view $v$ block $b$, it follows that, by $t+2\Delta +2\delta$,  $p_j$  receives messages from at least $(n-f)-(2f)=n-3f\geq 2f+1$ processors, each of which is either:
\begin{enumerate}
\item[(i)]  A nullify$(v)$ message, or;
\item[(ii)]  A vote for a view $v$ block different than $b$.
\end{enumerate}
 So, the conditions of lines \ref{beginN}-\ref{endN} are satisfied at this time, meaning that $p_j$ sends a nullify$(v)$ message (line \ref{sendN}). Any correct processor that does not vote for a view $v$ block also sends a nullify$(v)$ message by this time. So, all correct processors receive a nullification for view $v$ by $t+ 2\Delta +3\delta$, giving the required contradiction.
\end{proof}

\begin{lemma}
    Minimmit is optimistically responsive.
\end{lemma}
\begin{proof}
    Suppose $\text{tr}$ is first received by a correct processor at $t\geq $ GST. Since we assume correct processors send new transactions to all other processors upon first receiving them, $\text{tr}$ is received by all correct processors by $t+\delta$. Let $v_0$ be the greatest view that any correct processor is in at $t+\delta$, and let $v_1$ be the least view $>v_0$ such that $\mathtt{lead}(v)$ is correct. From Lemmas \ref{olem1} and \ref{olem2}, it follows that all correct processors enter view $v_1$ by time $t+O(f_a\Delta +\delta)$, and that all correct processors also finalise a view $v_1$ block, $b$ say,  by $t+O(f_a\Delta +\delta)$. According to the definition of the procedure ProposeChild$(b,v)$, $\text{tr}$ will be included in an ancestor of $b$. Since all ancestors of $b'$ of $b$ must receive M-notarisations prior to $\mathtt{lead}(v)$ proposing $b$,  at least $f+1$ correct processors send each such $b'$ to all processors, and correct processors receive all ancestors of $b$ by $t+O(f_a\Delta +\delta)$. All correct processors therefore finalise $\text{tr}$ by time $t+O(f_a\Delta +\delta)$.
\end{proof}

%
\section{Optimisations} \label{opt}
In Section \ref{formal}, we gave a specification aimed at simplicity. In this section, we describe a number of possible optimisations.

\subsection{Progression through views} The specification of Section \ref{formal} requires correct processors to progress sequentially through views. To recover quickly from periods of asynchrony, one can allow a correct processor that is presently in view $v'$ to progress immediately to view $v+1>v'$ upon seeing a nullification for view $v$, or an M-notarisation for view $v$. This requires the following modifications:
\begin{itemize}
\item[(1)] If $v''<v$ and a  correct processor $p_i$ in view $v$ receives an M-notarisation for some view $v''$ block $b$, and if $p_i$ has not voted for any view $v''$ block and has not sent a nullify$(v'')$ message, then $p_i$ must vote for $b$. This is now required to ensure that correct leaders finalise new blocks after GST, i.e., that a correct leader's block receives an L-notarisation.
\item[(2)] Upon entering view $v$, and before running  $\text{ProposeChild}(\text{SelectParent}(\mathtt{S},v),v)$, $\mathtt{lead}(v)$ must now wait until there exists $v''<v$ such that it has received:
\begin{itemize}
    \item An M-notarization for some view $v''$ block, and;
    \item Nullifications for all views in $(v'',v)$.
\end{itemize}
Lemmas \ref{L1} and \ref{olem1} still go through with this change in place, since, if the first correct processor to enter the view does so at $t\geq$ GST, then  $\mathtt{lead}(v)$ will still receive these required messages by $t+\delta$.
\end{itemize}
With these changes in place, the proofs of Sections \ref{consec}--\ref{resp} go through almost unchanged.

\subsection{Reducing the size of votes} \label{reduce}
In Section \ref{formal}, votes take the form $(\text{vote},b)$, and so include the entire block $b$.
 For constant-sized blocks, this does not affect asymptotic communication complexity. However, when blocks are large, a standard optimisation is to use votes of the form $(\text{vote},H(b))$, containing only the block's hash.

This optimisation introduces a data availability challenge, which is common to all protocols using votes that only specify the block's hash: a Byzantine leader might propose a block $b$
that receives sufficient votes for finalisation, but fail to send $b$
itself to all correct processors. Since correct processors need the actual block content to update their logs, they must have a mechanism to retrieve missing blocks.

A standard solution exploits the fact that finalisation requires votes from many processors. If a block is finalised, at least $n-f$ processors (including many correct ones) must have received and voted for it. Correct processors can therefore use a (potentially rate-limited) "pull" mechanism to retrieve any missing finalised blocks from peers who possess them, ensuring data availability without relying on potentially Byzantine leaders. An alternative approach, described in Section \ref{eras}, is to have leaders  broadcast blocks using erasure coding techniques.

\subsection{Erasure coding} \label{eras}
 In Minimmit, as in all leader-based protocols, leaders must broadcast potentially large blocks to all $n$ processors. With large blocks or high transaction throughput, leader bandwidth can become a bottleneck, limiting overall system performance.
 Following approaches used in recent SMR protocols \cite{shoup2023sing,shoup2025kudzu,alpen}, we can apply erasure coding techniques \cite{alhaddad2021succinct,cachin2005asynchronous} to distribute the communication load. The leader encodes each block into $n$ fragments such that any $d$ fragments suffice to reconstruct the original block (where $d$ is a parameter, set as required).
 
 As explained in Appendix \ref{erasureintu}, a \emph{naive} approach to the integration of erasure codes causes the resulting version of Minimmit to have a \emph{data expansion} rate of 5: If  blocks are of size $B$, this means the total amount of block data sent by each leader (to all processors combined) is approximately $5B$. The factor 5 arises because an M-notarisation only suffices to ensure that $f+1$ correct processors have received their corresponding block fragments, meaning that $f+1$ out of $n$ fragments should suffice for block reconstruction.  However, in Appendices \ref{erasureintu}-\ref{erasureanalysis}, we also describe how to integrate erasure codes with Minimmit to give a data expansion rate of 2.5.

\vspace{0.1cm}
\noindent \textbf{Trade-offs}. This optimisation reduces leader bandwidth requirements but adds fragment verification overhead. The approach is most beneficial when block sizes are large relative to network capacity.
As noted in Section \ref{reduce}, a benefit of the approach is that it ensures correct processors receive all finalised blocks, without requiring the use of a "pull" mechanism to retrieve any missing finalised blocks from peers who possess them.

\subsection{Threshold signatures and communication complexity} Section \ref{formal} assumes correct processors send newly received transactions to all others. In practice, transactions are typically disseminated through gossip networks, where each processor forwards transactions to a small constant number of peers. If transactions are constant-bounded in size, this approach maintains constant communication overhead per processor per transaction. Alternative dissemination mechanisms like Narwhal \cite{danezis2022narwhal} can also be employed. In our analysis here, we treat transaction propagation as a black box and focus on the consensus layer.  We assume blocks have constant-bounded size and, following standard practice, analyse only the communication costs required for consensus, taking mempool formation as given.

\vspace{0.2cm}
\noindent \textbf{Message complexity}. Per view, the protocol requires:
\begin{itemize}
\item[-] Leader proposal: $O(n)$ messages.
\item[-] Votes: $O(n^2)$ messages (each processor sends $\leq 1$ vote to all others).
\item[-] Nullify messages: $O(n^2)$ messages.
\item[-] Forwarding notarisations/nullifications: $O(n^2)$ messages.
\end{itemize}

\vspace{0.2cm}
\noindent \textbf{Communication complexity}. Each notarisation and nullification contains $\Omega(n)$ signatures, resulting in communication complexity greater than $O(n^2)$ per view.  The standard approach of using threshold signatures \cite{boneh2001short,shoup2000practical} can be used to ensure that the communication complexity per view is $O(n^2)$. We define:
\begin{itemize}
\item An `M-certificate' for $b$ to be threshold signature (of constant-bounded length for a given security parameter) formed from $2f+1$ votes for $b$ by different processors.

\item An `N-certificate' for view $v$, to be a threshold signature formed from $2f+1$ nullify$(v)$ messages by different processors.
\end{itemize}

\vspace{0.1cm}
\noindent \emph{Threshold signature implemetation details}. If $p_i$ has not already formed or received an M-certificate for $b$, then, upon receipt of $2f+1$ votes for $b$ by different processors, $p_i$ combines the $2f+1$ signatures to form an M-certificate (a single threshold signature). Rather than storing and forwarding the $2f+1$ votes, $p_i$ stores and forwards the M-certificate.
Similarly, $p_i$ stores and sends N-certificates, rather than storing and sending large collections of nullify$(v)$ messages.
Since partial signatures for any processor can be derived from $2f+1$ partial signatures, we must now stipulate that processors finalise a block $b$ upon receiving $n-f$ votes for $b$ directly from the corresponding processors.

We note that the pseudocode specified in Algorithm 1 does not require processors to forward L-notarisations. If one wished to implement threshold signatures also for L-notarisations (requiring the different threshold $n-f$), then one would need to establish two separate threshold signature schemes (two shared secrets), i.e., we require a separate shared secret for each threshold. This also means that each vote requires two signatures: one corresponding to each threshold. Of course, these two signatures can be computed/verified in parallel. Verification for the two signatures can also be transformed into an aggregate signature verification because the two signatures are over the same message payload.

\vspace{0.2cm}
\noindent \textbf{Alternative approaches}. We note that some protocols (e.g., Hotstuff \cite{yin2019hotstuff}) achieve linear communication complexity per view by relaying all messages via the leader. However, this approach significantly increases the number of communication rounds required, and the leader anyway acts as a communication bottleneck (e.g., see \cite{lewis2025pipes}).

\subsection{Compressed nullifications}
\noindent \textbf{The problem: nullification build-up during asynchrony}. During extended periods of asynchrony, processors may generate nullifications for many consecutive views without being able to finalise new blocks. When synchrony is restored, correct processors must exchange all accumulated nullifications before they can vote for new proposals. Since a processor requires nullifications for all intermediate views between a block's parent and the current view (see Section \ref{formal}), this can create substantial communication overhead.

\vspace{0.2cm}
\noindent \textbf{Solution: nullification aggregation}. For signature schemes supporting aggregation (e.g., BLS \cite{boneh2001short}), we can compress consecutive nullifications. Given threshold signatures for views in the range $[v, v']$, we aggregate them into a single signature $\sigma$ of constant size and send the tuple $(v, v', \sigma)$.

\vspace{0.2cm}
\noindent \textbf{Verification}. Aggregate signature verification requires specifying the message sequence and corresponding public keys. Here, the message sequence is implicitly defined by the view range $[v, v']$, and each nullification uses the same shared public key from the threshold scheme.



\section{Experiments} \label{exp}
We test Minimmit against Simplex~\cite{chan2023simplex} and Kudzu~\cite{shoup2025kudzu} using a deterministic simulator that executes protocol specifications on configurable network topologies. The simulator implementation and experiment configurations are released under both MIT and Apache-2 licenses.\footnote{\url{https://github.com/commonwarexyz/monorepo/tree/19f19d32760daf1d497295726ec92a1e6b84959f/examples/estimator}} A step-by-step guide for recreating every experiment appears in Appendix~\ref{app:repro}.

\vspace{0.2cm}
\noindent \textbf{Protocol selection}. Simplex introduced state-of-the-art transaction latency among 3-round finality protocols. Its design inspired Minimmit: both protocols decouple view iteration from finalisation, shrinking the worst-case delay before a transaction is included in a block. This structural similarity makes it a natural baseline against which to quantity the impact of Minimmit's relaxed Byzantine fault tolerance.

Kudzu (and the similarly constructed Alpenglow~\cite{alpen}) recently delivered state-of-the-art transaction latency for 2-round finality protocols by concurrently evaluating fast and slow paths. Unlike Alpenglow, which incorporates a scheme for disseminating erasure-coded block data during fixed 400ms slots\footnote{A processor in Votor will not cast a vote for some block until it has recovered the entire block from Rotor.}, Kudzu is responsive and serves as a better candidate for comparison to Minimmit. We defer a comprehensive comparison to Alpenglow's block dissemination to future work. 

\vspace{0.2cm}
\noindent \textbf{Simulation procedure}. Each simulation iteration designates a leader and treats the remaining processors as replicas. The leader initiates the view by broadcasting a payload of configurable size to all replicas. Replicas process received payloads after at most a configurable number of pending messages and broadcast their own protocol-specific messages. The simulator records the time at which each processor reaches salient protocol milestones and reports the mean and standard deviation over all processors. Because the runtime is deterministic, these measurements are reproducible by re-running the same command.\footnote{\url{https://github.com/commonwarexyz/monorepo/tree/19f19d32760daf1d497295726ec92a1e6b84959f/runtime/src/deterministic.rs}}

\vspace{0.2cm}
\noindent \textbf{Latency and bandwidth model}. For every message transmission we sample a delay from a normal distribution with mean equal to the p50\footnote{\url{https://www.cloudping.co/api/latencies?percentile=p_50&timeframe=1Y}} AWS inter-region latency and standard deviation $(\text{p90}-\text{p50})$.\footnote{\url{https://www.cloudping.co/api/latencies?percentile=p_90&timeframe=1Y}} Each processor has symmetric 1~Gbps (125{,}000{,}000~B/s) ingress and egress budgets, so message delivery time is the sum of the sampled network delay and the bandwidth-limited transmission time. We assume linear bandwidth usage and max-min fairness for bandwidth allocation, yielding conservative latency estimates. In practice, burstable bandwidth and traffic prioritisation provided by cloud operators would only decrease the reported latencies.

\vspace{0.2cm}
\noindent \textbf{Uniform global deployment}. We place 50 processors uniformly across ten AWS regions (us-west-1, us-east-1, eu-west-1, ap-northeast-1, eu-north-1, ap-south-1, sa-east-1, eu-central-1, ap-northeast-2, ap-southeast-2) and enable the ``reducing the size of votes'' optimisation from Section \ref{reduce} for all protocols, so that the leader distributes full blocks (i.e. no erasure coding) while replicas vote on digests. We assume instantaneous block production and begin broadcasting at time $t=0$. Table~\ref{tab:global} summarises the resulting latency when proposing 32~KB blocks, the block size at which Minimmit processes 1,000 transactions per second\footnote{Assumes each transaction is 200~B.} in this configuration.

\begin{table}[h]
  \centering
  \caption{Uniform global deployment (50 processors, symmetric 1~Gbps links).}
  \label{tab:global}
  \begin{tabular}{|l|c|c|c|}
    \hline
    \textbf{Protocol} & \textbf{View Latency} & \textbf{Block Latency} & \textbf{Transaction Latency} \\
    \hline
    Simplex & $194.61 \pm 30.34$~ms & $299.34 \pm 25.98$~ms & $493.95 \pm 7.50$~ms \\
    Kudzu & $189.94 \pm 29.32$~ms & $220.31 \pm 28.58$~ms & $410.25 \pm 7.61$~ms \\
     Minimmit & $146.07 \pm 21.33$~ms & $220.3 \pm 28.57$~ms & $366.37 \pm 7.06$~ms \\
    \hline
  \end{tabular}
\end{table}

Minimmit reduces transaction latency by 25.8\% relative to Simplex and by 10.7\% relative to Kudzu. The improvement stems from Minimmit's ability to progress views after collecting $2f+1$ votes.

\vspace{0.2cm}
\noindent \textbf{Region-centric deployment}. We next cluster 25 processors in the United States (13 in us-west-1 and 12 in us-east-1) and place the remaining 25 uniformly across the other eight regions. Table~\ref{tab:regional} reports the resulting latencies for the same 32~KB blocks, increasing Minimmit's effective processing rate from 1,000 to 1,500 transactions per second at a lower transaction  latency.

\begin{table}[h]
  \centering
  \caption{Region-centric deployment (50 processors, symmetric 1~Gbps links).}
  \label{tab:regional}
  \begin{tabular}{|l|c|c|c|}
    \hline
    \textbf{Protocol} & \textbf{View Latency} & \textbf{Block Latency} & \textbf{Transaction Latency} \\
    \hline
    Simplex & $149.95 \pm 24.58$~ms & $222.32 \pm 24.05$~ms & $372.27 \pm 6.97$~ms \\
    Kudzu & $139.77 \pm 26.17$~ms & $185.16 \pm 24.09$~ms & $324.93 \pm 7.09$~ms \\
     Minimmit & $104.93 \pm 34.71$~ms & $187.67 \pm 27.13$~ms & $292.6 \pm 7.86$~ms \\
    \hline
  \end{tabular}
\end{table}

Under this regional skew, Minimmit decreases transaction latency by 21.4\% relative to Simplex and by 9.95\% relative to Kudzu. Kudzu attains slightly lower block latency because its slow path races the fast path, and in this topology some leaders complete two rounds gathering $3f+1$ votes before Minimmit collects $n-f$ votes.

\vspace{0.2cm}
\noindent \textbf{Large blocks}. Larger blocks take longer to transmit: broadcasting a 1~MB block to 50 processors over 1~Gbps links already requires roughly 400~ms of egress latency.\footnote{Recall, we assume bandwidth allocation is max-min fair.} Table~\ref{tab:global-large} reproduces the global deployment with 1~MB blocks, the block size at which Minimmit processes 10,000 transactions per second\footnote{Assumes each transaction is 200~B.} in this configuration.

\begin{table}[h]
  \centering
  \caption{Uniform global deployment (50 processors, symmetric 1~Gbps links).}
  \label{tab:global-large}
  \begin{tabular}{|l|c|c|c|}
    \hline
    \textbf{Protocol} & \textbf{View Latency} & \textbf{Block Latency} & \textbf{Transaction Latency} \\
    \hline
    Simplex & $593.61 \pm 30.34$ms & $698.34 \pm 25.98$~ms & $1291.95 \pm 7.50$ms \\
    Kudzu & $588.94 \pm 29.32$~ms & $619.31 \pm 28.58$~ms & $1208.25 \pm 7.61$~ms \\
     Minimmit & $545.07 \pm 21.33$~ms & $619.3 \pm 28.57$~ms & $1164.37 \pm 7.06$~ms \\
    \hline
  \end{tabular}
\end{table}

 Introducing a Reed--Solomon erasure coding scheme reduces this bottleneck. We split each block into 50 fragments, have the leader broadcast a fragment to every processor, and require replicas to broadcast their fragment to all other processors when casting a vote. The fragment size is determined by the quorum needed for view progression: Simplex targets $3f+1$ replicas and therefore transmits 61.72~KB fragments so that any $f+1$ fragments suffice for reconstruction; E-Minimmit (as described in Appendices \ref{erasuresetup}-\ref{erasureanalysis} targets $5f+1$ replicas and sends 55.19~KB fragments to ensure $2f+1$ fragments suffice; similarly, Kudzu also targets $5f+1$ replicas and requires 55.19~KB fragments to guarantee reconstruction from $2f+1$ fragments. Table~\ref{tab:global-ec} reports the resulting latencies, increasing Minimmit's effective processing rate from 10,000 to around 30,000 transactions per second at a lower transaction latency.

\begin{table}[h]
  \centering
  \caption{Uniform global deployment with erasure coding (1~MB blocks).}
  \label{tab:global-ec}
  \begin{tabular}{|l|c|c|c|}
    \hline
    \textbf{Protocol} & \textbf{View Latency} & \textbf{Block Latency} & \textbf{Transaction Latency} \\
    \hline
    Simplex & $230.62 \pm 30.33$~ms & $335.36 \pm 25.97$~ms & $565.98 \pm 7.50$~ms \\
    Kudzu & $220.94 \pm 29.33$~ms & $251.29 \pm 28.63$~ms & $472.23 \pm 7.61$~ms \\
    Minimmit & $177.06 \pm 21.45$~ms & $251.29 \pm 28.63$~ms & $428.35 \pm 7.08$~ms \\
    \hline
  \end{tabular}
\end{table}

With coding enabled, Minimmit achieves the lowest view latency, block latency and  transaction latency.  Minimmit \emph{maximises throughput} across all protocols, since blocks of a fixed size are produced at a faster rate.

Increasing per processor bandwidth from 1~Gbps to 10~Gbps under this naïve coding scheme, produces latencies illustrated in Table~\ref{tab:global-ec-ib}.  

\begin{table}[h]
  \centering
  \caption{Uniform global deployment with increased bandwidth and erasure coding (10~Gbps per processor and 1~MB blocks).}
  \label{tab:global-ec-ib}
  \begin{tabular}{|l|c|c|c|}
    \hline
    \textbf{Protocol} & \textbf{View Latency} & \textbf{Block Latency} & \textbf{Transaction Latency} \\
    \hline
    Simplex & $186.63 \pm 30.35$~ms & $291.33 \pm 26.0$~ms & $477.96 \pm 7.51$~ms \\
    Kudzu & $181.86 \pm 29.31$~ms & $212.3 \pm 28.6$~ms & $394.16 \pm 7.61$~ms \\
    Minimmit & $138.02 \pm 21.37$~ms & $212.3 \pm 28.6$~ms & $350.32\pm 7.07$~ms \\
    \hline
  \end{tabular}
\end{table}

\noindent  Minimmit decreases transaction latency by 26.7\% relative to Simplex and 11.1\% relative to Kudzu, and also decreases view latency by 26.0\% relative to Simplex and 24.1\% relative to Kudzu. 
 

%
\section{Related work} \label{rw}

\noindent \textbf{Classical Byzantine Consensus}. The study of protocols for reaching consensus in the presence of Byzantine faults was introduced by Lamport, Shostak, and Pease \cite{lamport1982byzantine}. Dwork, Lynch and Stockmeyer \cite{DLS88} showed that $n\geq 3f+1$ is optimal for partial synchrony.
Standard protocols using the assumption $n\geq 3f+1$,  such as PBFT \cite{castro1999practical} and Tendermint \cite{buchman2016tendermint,buchman2018latest}, satisfy 3-round finality. As shown by \cite{abraham2021good}, this is optimal.

\vspace{0.2cm}
\noindent \textbf{Optimistic Responsiveness}. While many standard protocols, such as PBFT, satisfy forms of optimistic responsiveness, a specific form of the concept was first discussed in \cite{pass2018thunderella}. Optimistic responsiveness has been further studied in a number of papers (e.g., \cite{yin2019hotstuff,abraham2020sync}) and can be defined in a number of ways  \cite{yin2019hotstuff,lewis2023permissionless,lewis2024lumiere}.

\vspace{0.2cm}
\noindent \textbf{Fast-Path Approaches}.
A long line of work \cite{brasileiro2001consensus,friedman2005simple,guerraoui2007refined,kursawe2002optimistic,martin2006fast,song2008bosco} considers protocols with a `fast path', which allows for quick termination/finalisation in certain `good' scenarios.
Kursawe \cite{kursawe2002optimistic} describes an agreement protocol that runs with $3f+1$ processors, but is able to commit in two steps  when all processes act correctly  and the network is synchronous, falling back  to a randomised asynchronous consensus protocol otherwise. FaB \cite{martin2006fast} extends Kursawe's approach by more closely integrating the fast path and the fall-back mechanism (the `slow path'), and by introducing a parameterised model with $n \geq 3f + 2p + 1$ processors. Fast termination is achieved, so long as at most $p$ processors are Byzantine. Unfortunately, the protocol suffers from a liveness bug \cite{abraham2017revisiting}. FaB also gives a proof that the assumed bound $n\geq 3f+ 2p+ 1$  is tight. However, it was pointed out in \cite{kuznetsov2021revisiting} and \cite{abraham2021good} that the proof only applies to a specific form of protocol. Kuznetsov et al.~\cite{kuznetsov2021revisiting} show that, in fact,  the bound $n\geq 3f+2p-1$ is optimal. Zyzzyva \cite{kotla2007zyzzyva} also builds on FaB by (similarly) integrating the fast and slow paths, and by describing an SMR protocol, rather than a protocol for one-shot consensus. As pointed out in \cite{abraham2017revisiting}, the view-change mechanism in Zyzzyva does not guarantee safety when leaders are faulty. SBFT \cite{gueta2019sbft} also builds on the ideas introduced in FaB in order to allow the fast path to  tolerate a small number of crash failures.

\vspace{0.2cm}
\noindent \textbf{Modern 2-Round Protocols}.
Alpenglow \cite{alpen} is formally analysed under the assumption that $n\geq 5f+1$. The paper also considers circumstances in which the protocol can tolerate a further $f$ crash failures, but the required assumptions for this case (essentially that Byzantine leaders cannot carry out a form of proposal equivocation) do not hold under partial synchrony.  Banyan \cite{vonlanthen2024banyan} carries out the fast path in parallel with the slow path mechanism, but can suffer from unbounded message complexity with faulty leaders. Kudzu (like FaB) makes the more general assumption that $n\geq 3f+2p+1$ for a tunable parameter $p$. Kudzu and Alpenglow also describe the use of erasure coding techniques to allow for significantly improved maximum throughput (such techniques were also employed by SMR protocols in earlier papers, such as \cite{shoup2023sing}).

 Compared to Minimmit, Alpenglow, and Kudzu, Hydrangea has improved resilience to crash failures.  For a parameter $k\geq 0$, and for a system of $n = 3f + 2c + k + 1$ processors, Hydrangea achieves 2-round finality, so long as the number of faulty processors (Byzantine or crash) is at most $p = \lfloor \frac{c+k}{2} \rfloor$. In the case that $c=0$, this aligns precisely with the bounds provided by Kudzu. However, in more adversarial settings with up to $f$ Byzantine faults and $c$ crash faults,
Hydrangea also obtains finality after two rounds of voting.

As for Minimmit, ChonkyBFT \cite{francca2025chonkybft} assumes $n\geq 5f+1$ and employs a single round of voting, but does not have Minimmit's mechanism for fast view progression.

\vspace{0.2cm}
\noindent \textbf{Positioning of Minimmit}. Minimmit assumes $n\geq 5f+1$ and achieves 2-round finality. The advantage of Minimmit over all previous approaches to 2-round finality is its fast view change mechanism, which, as described in Sections \ref{intro} and \ref{exp}, allows for decreased view and transaction latency. While subjective, we also believe that the simplicity of the protocol will make it attractive to practitioners.
\section{Final comments} \label{fc}
We have presented Minimmit, a Byzantine fault-tolerant SMR protocol that achieves reduced transaction latency through a novel view-change mechanism. By decoupling view progression from transaction finality—requiring only $2f+1$ votes for view changes while requiring $n-f$ votes for finalisation—Minimmit demonstrates that significant latency improvements are possible without sacrificing safety or liveness guarantees.

Our experimental evaluation shows an approximately 23\% reduction in view latency and an 11\% reduction in transaction latency compared to existing approaches, achieved through faster view progression in geographically distributed networks. The protocol's simplicity, requiring no complex slow-path mechanisms, may facilitate practical adoption in systems where low latency is critical.

\section{Acknowledgements}
We would like to thank Ittai Abraham, Benjamin Chan, Denis Kolegov, Ling Ren, and Victor Shoup for helpful conversations.

\bibliographystyle{ACM-Reference-Format}

\appendix

\section{Reproducing Experiments}\label{app:repro}

All experiments were executed with the \texttt{commonware-estimator}\footnote{\url{https://github.com/commonwarexyz/monorepo/tree/19f19d32760daf1d497295726ec92a1e6b84959f/examples/estimator}}. Each run combines one of the protocol scripts below with a network distribution. The workflow is:
\begin{enumerate}
    \item Save the desired script as \texttt{<protocol>.lazy}.
    \item Invoke \texttt{commonware-estimator --distribution <distribution> <protocol>.lazy} using one of the configurations listed at the end of this section.
\end{enumerate}

\subsection*{Protocol scripts}

\noindent The following schedules apply to experiments without erasure coding. Set \texttt{<proposal\_bytes>} to \texttt{32768} for the 32~KB runs and to \texttt{1048576} for the 1~MB runs.

\begin{tcolorbox}[title=Simplex,colback=gray!5!white,colframe=black!75!black,boxrule=0.5pt]
\begin{verbatim}
# Simplex
propose{0, size=<proposal_bytes>}
wait{0, threshold=1}
broadcast{1, size=40}
wait{1, threshold=67%}
broadcast{2, size=40}
wait{2, threshold=67%}
\end{verbatim}
\end{tcolorbox}

\begin{tcolorbox}[title=Kudzu (no coding),colback=gray!5!white,colframe=black!75!black,boxrule=0.5pt]
\begin{verbatim}
# Kudzu (no coding)
propose{0, size=<proposal_bytes>}
wait{0, threshold=1}
broadcast{1, size=40}
wait{1, threshold=61%}
broadcast{2, size=40}
wait{1, threshold=81%} || wait{2, threshold=61%}
\end{verbatim}
\end{tcolorbox}

\begin{tcolorbox}[title=Minimmit,colback=gray!5!white,colframe=black!75!black,boxrule=0.5pt]
\begin{verbatim}
# Minimmit
propose{0, size=<proposal_bytes>}
wait{0, threshold=1}
broadcast{1, size=40}
wait{1, threshold=41%}
wait{1, threshold=81%}
\end{verbatim}
\end{tcolorbox}

\vspace{0.2cm}
\noindent Use the erasure-coded variants below for Section~\ref{exp}'s coded experiments (message sizes already include shard data). 

\begin{tcolorbox}[title=Simplex (erasure coded),colback=gray!5!white,colframe=black!75!black,boxrule=0.5pt]
\begin{verbatim}
# Simplex (erasure coded)
propose{0, size=61682}
wait{0, threshold=1}
broadcast{1, size=61722}
wait{1, threshold=67%}
broadcast{2, size=40}
wait{2, threshold=67%}
\end{verbatim}
\end{tcolorbox}

\begin{tcolorbox}[title=Kudzu (erasure coded),colback=gray!5!white,colframe=black!75!black,boxrule=0.5pt]
\begin{verbatim}
# Kudzu (erasure coded)
propose{0, size=55190}
wait{0, threshold=1}
broadcast{1, size=55230}
wait{1, threshold=61%}
broadcast{2, size=40}
wait{1, threshold=81%} || wait{2, threshold=61%}
\end{verbatim}
\end{tcolorbox}

\begin{tcolorbox}[title=Minimmit (erasure coded),colback=gray!5!white,colframe=black!75!black,boxrule=0.5pt]
\begin{verbatim}
# Minimmit (erasure coded)
propose{0, size=55190}
wait{0, threshold=1}
broadcast{1, size=55230}
wait{1, threshold=41%}
wait{1, threshold=81%}
\end{verbatim}
\end{tcolorbox}

\subsection*{Network distributions}

\noindent Pair the scripts with one of the following network distributions when calling \texttt{commonware-estimator}. The Uniform global distribution is reused for both block sizes; only \texttt{<proposal\_bytes>} changes.

\begin{tcolorbox}[title=Uniform global (1~Gbps links),colback=gray!5!white,colframe=black!75!black,boxrule=0.5pt]
\begin{verbatim}
commonware-estimator --distribution \
  us-west-1:5:125000000,us-east-1:5:125000000,\
  eu-west-1:5:125000000,ap-northeast-1:5:125000000,\
  eu-north-1:5:125000000,ap-south-1:5:125000000,\
  sa-east-1:5:125000000,eu-central-1:5:125000000,\
  ap-northeast-2:5:125000000,ap-southeast-2:5:125000000 \
  <protocol>.lazy
\end{verbatim}
\end{tcolorbox}

\begin{tcolorbox}[title=Region-centric (1~Gbps links),colback=gray!5!white,colframe=black!75!black,boxrule=0.5pt]
\begin{verbatim}
commonware-estimator --distribution \
  us-west-1:13:125000000,us-east-1:12:125000000,\
  eu-west-1:3:125000000,ap-northeast-1:4:125000000,\
  eu-north-1:3:125000000,ap-south-1:3:125000000,\
  sa-east-1:3:125000000,eu-central-1:3:125000000,\
  ap-northeast-2:3:125000000,ap-southeast-2:3:125000000 \
  <protocol>.lazy
\end{verbatim}
\end{tcolorbox}

\begin{tcolorbox}[title=Uniform global (10~Gbps links),colback=gray!5!white,colframe=black!75!black,boxrule=0.5pt]
\begin{verbatim}
commonware-estimator --distribution \
  us-west-1:5:1250000000,us-east-1:5:1250000000,\
  eu-west-1:5:1250000000,ap-northeast-1:5:1250000000,\
  eu-north-1:5:1250000000,ap-south-1:5:1250000000,\
  sa-east-1:5:1250000000,eu-central-1:5:1250000000,\
  ap-northeast-2:5:1250000000,ap-southeast-2:5:1250000000 \
  <protocol>.lazy
\end{verbatim}
\end{tcolorbox}

\section{Erasure codes: the setup} \label{erasuresetup} 

In this section, we describe the (standard) cryptographic primitives required for Appendices \ref{erasureintu}-\ref{erasureanalysis}. 

\vspace{0.2cm}
\noindent \textbf{Threshold signatures}. A $k$-of-$n$ threshold signature scheme allows \emph{signature shares} from any $k$ processors on a given message to be combined to form a \emph{certificate} on that message. Forming a certificate on any message is infeasible given less than $k$ signature shares.  Such schemes can be implemented using BLS signatures \cite{boneh2001short}.  We use one threshold signature scheme with $k=2f+1$, and another with $k=n-f$. 

\vspace{0.2cm}
\noindent \textbf{Erasure codes}.
 We suppose given an $(n,2f+1)$-erasure code, which uniquely encodes any bit string $C$ of length $\beta$ as a sequence of $n$ \emph{fragments}, $c_1,\dots,c_n$,  in such a way that any $2f+1$  fragments and $\beta$ suffice to efficiently reconstruct $C$. We suppose all fragments have the same size (as a function of $n,f$ and $\beta$). 
Reed-Solomon codes can be used to realise an $(n,2f+1)$-erasure code so that each fragment has size $\approx \beta/(2f+1)$. If $n=5f+1$, this leads to a data expansion rate of roughly $2.5$, i.e., the combined size of all $n$ fragments is roughly 2.5$\beta$. 

\vspace{0.2cm}
\noindent \textbf{Merkle trees}. We use Merkle trees in the standard way to allow a processor $p$ to commit to a sequence of values $v_1,\dots,v_n$. To form the commitment, $p$ constructs a full binary tree in which the leaves are the hashes of $v_1,\dots,v_n$ and every other node is the hash of its two children. The commitment is the root of the tree, $r$ say. 
To \emph{open} the commitment at position $i$, $p$ specifies $v_i$ and a \emph{validation path  from $r$ to $v_i$ at position $i$}: the validation path specifies the sibling of each node on the path from the hash of $v_i$ at position $i$ to the root. 

\vspace{0.2cm}
\noindent \textbf{Encoding, certified fragments, and tags}. We use techniques introduced by Cachin and Tessaro \cite{cachin2005asynchronous} for the purpose of \emph{asynchronous verifiable information dispersal (AVID)}. These techniques combine the use of Merkle trees and erasure codes. Given a bit string $C$ of length $\beta$, let $c_1,\dots,c_n$ be the corresponding fragments produced by our $(n,2f+1)$-erasure code. Form a Merkle tree whose leaves are the hashes of $c_1,\dots, c_n$ and let $r$ be the root of this tree. For each $i\in [n]$, let $\pi_i$ be a validation path  from $r$ to $v_i$ at position $i$. We define the \emph{tag} $\tau(C):= (\beta, r)$ and set: 
\[ \text{Encode}(C):= (\tau(C), \{ (c_i,\pi_i ) \}_{i\in [n]} ). \]

\noindent If $z= (\beta', r')$ for some $\beta'\in \mathbb{N}$ and some hash value $r'$, we say $(c,\pi)$ a \emph{certified fragment of $z$ at $i$} if both: 
\begin{itemize} 
\item $c$ is of the correct length (given $n$ and $f$) to be a fragment of a message of length $\beta'$, and;
\item $\pi$ is a validation path from $r'$ to $c$ at position $i$. 
\end{itemize}

\vspace{0.2cm}
\noindent \textbf{Decoding}. The function Decode takes inputs of the form
\[ (z,  \{ (c_i,\pi_i ) \}_{i\in I} ), \]
where $z=(\beta, r)$ for $\beta\in \mathbb{N}$, $r$ is a hash value, $I\subseteq [n]$ with $|I|=2f+1$, and each $(c_i,\pi_i)$ is a certified fragment of $z$ at $i$. It then reconstructs a message $C$ of length $\beta$ from the fragments $\{ c_i \}_{i\in I}$. If this reconstruction fails (e.g., due to a formatting error) it outputs $\bot$. Otherwise, it computes $\tau(C)=(\beta, r')$. If $r\neq r'$ it outputs $\bot$, and otherwise outputs $C$.

\section{Erasure coding for Minimmit: the intuition} \label{erasureintu} 

In this section, we discuss the intuition behind the modifications required to describe a version of Minimmit, called E-Minimmit, which efficiently integrates the use of erasure codes.

To explain our approach, we first consider a simple (but non-optimal) version, which has a data-expansion rate of 5, i.e., the total size of all fragments sent out by each leader to all processors combined is roughly five times the size of a block. The factor of 5 arises because this simple protocol requires that $f+1$ fragments suffice to reconstruct a block. Then we describe a modified approach using an $(n,2f+1)$ erasure code, resulting in a data-expansion rate of 2.5. 

\subsection{A simple (but non-optimal) approach} \label{simple}

If one is happy to accept a data-expansion rate of 5, then erasure codes can be integrated with Minimmit in an entirely standard way, using methods first described in \cite{cachin2005asynchronous}. Since E-Minimmit will use threshold signatures, we describe an approach that does the same:  
\begin{itemize}
    \item  Rather than the leader $p$ of view $v$ sending the entire block $b$ to each processor, it first separates out the \emph{payload} $C:=b.\text{Tr}$ of the block, i.e., the sequence of transactions associated with the block.
    \item It then calculates Encode$(C)=(\tau(C), \{ (c_i,\pi_i ) \}_{i\in [n]} )$ (as described in Appendix \ref{erasuresetup}, but now using an $(n,f+1)$-erasure code).
    \item The block $b$ itself is now just a tuple $(v,\tau(C),h)$ signed by $p$, where $h$ is the hash of the parent of $b$. We refer to the message $(b,i,c_i,\pi_i)$ as \emph{the certified fragment of $b$ at $i$}. 
    \item The leader $p$ then sends each $p_i$ the certified fragment of $b$ at $i$. 
    \item Upon checking that this message is well-formed, including verifying 
   $\pi_i$ and that the necessary notarisations and nullifications (or the corresponding threshold certificates) have been received, $p_i$ then disseminates the certified fragment of $b$ at $i$ as well as a vote for $b$. The vote for $b$ contains  $p_i$'s signature share for both the M-notarisation and 
   L-notarisation threshold schemes on the message $(\text{vote}, b)$.
    \item The remaining instructions are largely unchanged, except that,  instead of requiring processors to forward on M-notarisations, we have processors form a certificate (an `M-certificate') from the signature shares included in votes, using a $(2f+1)$-of-$n$ threshold signature scheme. Processors now forward these certificates instead of a set of $2f+1$ votes (and we proceed similarly for nullifications).
    \item The factor 5 arises because the existence of an M-certificate only suffices to ensure that $(2f+1)-f=f+1$ correct processors have sent their fragments to all processors. 
    So, we require that $f+1$ fragments suffice for block payload reconstruction. 
\end{itemize} 

Next, we describe the intuition behind reducing the data-expansion rate to 2.5. 

\subsection{Reducing the data-expansion rate to 2.5} \label{2.5intu}

To reduce the data-expansion rate, we use an $(n,2f+1)$-erasure code. This means the existence of an M-certificate no longer ensures sufficiently many correct processors have disseminated their fragments to guarantee data-availability for the block payload. Accordingly, we must now allow transition from view $v$ upon receiving either:
\begin{itemize}
\item[(a)] A nullification (or the corresponding certificate) for the view, or;
\item[(b)] An M-certificate for some view $v$ block \emph{and}  fragments from sufficiently many processors  to reconstruct the block. 
\end{itemize} 
With this change in place, there are two further required changes to the protocol: 

\vspace{0.2cm} 
\noindent \textbf{The sending of others' fragments (in the bad case).} Suppose that $p_i$ is in view $v$ and finds that (b) above applies with respect to some block $b$. Then $p_i$ should immediately progress to the next view. However, since $p_i$ may be the leader of some subsequent view (and may propose a block with parent $b$ that it wishes others to vote for), it must also guarantee that other correct processors will be able to reconstruct the block payload. Since up to $f$ of the processors that sent fragments to $p_i$ may be faulty, this is not necessarily the case. To deal with this issue,  we first consider an approach that leads to large time-outs. Then we consider how to reduce time-outs. 

\vspace{0.2cm} 
\noindent \emph{An approach that gives time-outs of $8\Delta$}. One approach $p_i$ could follow would be to select a set of $f$ processors disjoint from those whose fragments $p_i$ has received, and to send those processors notification that $p_i$ can send them their fragment if they have not already received it. This ensures that $3f+1$ processors have access to their own fragment, so that $2f+1$ correct processors have access to their own fragment. Data-availability is ensured so long as those correct processors disseminate their fragment to others. 

While this approach would work, it has the downside of significantly increasing the required time-out for each view. Suppose the first correct processor $p_i$ to enter view $v$ does so at $t\geq \text{GST}$ because they find that (b) above applies w.r.t.\ some block $b_1$. Suppose $\mathtt{lead}(v)$ is correct and that we wish to ensure they propose a new block that becomes finalised. Then all correct processors, including $\mathtt{lead}(v)$, will enter the view by $t+4\Delta$: we must wait one $\Delta$ for others to receive $p_i$'s messages (indicating that it can send required fragments), another $\Delta$ for $p_i$ to receive responses to those messages, one $\Delta$ for $p_i$ to send the relevant fragments, and then a final $\Delta$ for those fragments to be disseminated. However, $\mathtt{lead}(v)$ may enter view $v$ just before it receives sufficient fragments to recover $b_1$ because (b) above applies w.r.t. a different block $b_2$, and may propose a block $b_3$ with $b_2$ as parent. Using the same argument, we can only guarantee that correct processors will be able to recover $b_2$ and vote for $b_3$ by $t+8\Delta$. This would mean time-outs of $8\Delta$, meaning that crashed leaders cause significant delays. 

\vspace{0.2cm} 
\noindent \emph{An approach with reduced time-outs}. We follow an approach that gives significantly reduced time-outs, while requiring no extra message sending in the case that leaders are correct during synchrony. The approach uses a parameter $s$, which should be set to roughly the expected time between receiving $2f+1$ votes and receiving $3f+1$ votes when the leader is correct during synchrony. Generally, this will be some small fraction of $\Delta$. Upon leaving view $v$ because it has received an M-certificate for $b$ and successfully decoded the payload, $p_i$ now sets a timer to expire in time $s$. The intuition is that if the leader was correct during synchrony, 
$p_i$ will receive $3f+1$ votes within time $s$, ensuring $2f+1$ correct 
processors have their fragments—sufficient for data availability. Only if 
this threshold is not met does $p_i$ need to intervene. Once the timer expires, processor $p_i$: 
\begin{itemize} 
\item Sets $I:= \{ j\in [n]:\ p_i \text{ has received a certified fragment of }b\text{ at }j \text{ from }p_j \}$; 
\item Sets $x:=(3f+1)-|I|$, and if $x>0$ then it:
\begin{enumerate} 
\item[(i)] Chooses  $J\subseteq [n]$ with $|J|=x$ and $J\cap I=\emptyset$; 
\item[(ii)] For each $j\in J$, sends $p_j$ a certified fragment of $b$ at $j$.
\end{enumerate} 
\end{itemize} 
The consequence is that by time $s+\Delta$ after $p_i$ leaves view $v$ at time $t$, 
at least $3f+1$ processors possess their fragments (if necessary, by $p_i$'s intervention), guaranteeing that $2f+1$ correct processors 
have disseminated their fragments. This ensures all correct processors leave view $v$ by $t+2\Delta+s$. The same argument as above shows that timeouts can now be set to $4\Delta +2s$.  

\vspace{0.2cm} 
\noindent \emph{An optimisation reducing timeouts to $3\Delta$}. We will initially formalise a version of E-Minimmit that ensures every correct leader after GST finalises a new block, and which uses timeouts of $4\Delta +2s$. However, in Appendix \ref{erasureanalysis}, we will also describe an optimisation that uses timeouts of $3\Delta$. The trade-off is that view $v$ is then only guaranteed to produce a new finalised block if view $v-1$  begins after GST and the leaders of both views $v-1$ and $v$ are correct.

\vspace{0.2cm} 
\noindent \textbf{Late dissemination of one's own fragment (in the bad case).} 
The protocol described above ensures data availability through two mechanisms: 
processors disseminate fragments with their  votes, and processors who 
leave views early may subsequently send fragments to others. However, a subtle 
issue remains. Consider a processor $p_j$ that receives a certified fragment of block $b$ from 
some processor $p_i$ (via the mechanism described above) after $p_j$ has already 
voted for a different block in view $v$ or after $p_j$ has timed out. In this 
case, $p_j$ cannot produce a vote for $b$, and therefore would not automatically 
disseminate its fragment for $b$.

To ensure data availability in such scenarios, we require that processors 
disseminate their own fragments 
whenever they hold a fragment for a block with an M-certificate but have not 
yet disseminated that fragment. 
This ensures that $2f+1$ correct processors eventually disseminate their 
fragment, guaranteeing data availability.\footnote{Note that in executions without Byzantine behaviour, each processor still 
disseminates its fragment for at most one block per view. Byzantine leaders can produce slashable behaviour causing some correct processors to disseminate two extra fragments.  This behaviour is `slashable', because it requires the leader to produce more than one block for the same view, leading to objective proof of Byzantine action (for a formal account of `accountability/slashability' see \cite{lewis2025beyond}).}

\subsection{Avoiding the finalisation of junk blocks} \label{junkintu} 
We introduce one further modification to address an issue arising from erasure coding: since processors vote for blocks before decoding their payloads, Byzantine leaders can cause blocks with invalid payloads to receive L-notarisations. Whilst one could simply finalise such blocks and subsequently ignore their contents when forming the transaction sequence, we prefer to exclude them entirely from the finalised chain. We achieve this through a straightforward mechanism.

The key insight is to decouple M-certificates from block inclusion in the finalised chain. We proceed as follows: 
\begin{enumerate}
    \item Each processor $p_i$ maintains a local variable $\mathtt{blocks}$ containing the genesis block and all blocks $b$ satisfying: 
    \begin{itemize} 
    \item[(i)] $p_i$ has received (or formed) an M-certificate for $b$;
    \item[(ii)] $p_i$ has received sufficient fragments to decode the payload and verify it is well-formed, and;
    \item[(iii)] The parent of $b$ belongs to $\mathtt{blocks}$. 
    \end{itemize} 
    \item Processors advance from view $v$ upon receiving either a nullification (or the corresponding certificate) for view $v$, or upon finding that $\mathtt{blocks}$ contains a view $v$ block. 
    \item Processors finalise only those blocks appearing in (their local copy of) $\mathtt{blocks}$ (upon receipt of L-notarisations or L-certificates). 
\end{enumerate}

This approach introduces a liveness concern: if view $v$ produces a block with an M-certificate but invalid payload, we must ensure processors can still progress beyond view $v$. Since the block never enters $\mathtt{blocks}$ (failing condition (ii)), processors may require a nullification for view $v$ to advance.

We address this by requiring processors to issue nullify$(v)$ messages upon decoding a view $v$ block with an M-certificate and discovering the payload is invalid. This preserves consistency: the standard quorum intersection argument establishes that if view $v$ produces a block with a valid payload that achieves finalization, then no alternative view $v$ block can receive an M-certificate. Consequently, no processor will decode a distinct view $v$ block, discover it has an invalid payload, and issue a nullify$(v)$ message. This ensures that nullifications for view $v$ cannot reach the $2f+1$ threshold required.

\section{E-Minimmit: the formal specification} \label{erasurespec} 

As in Section \ref{formal},  we say `disseminate' to mean `send to all processors'. When a correct processor is instructed to send a message to itself, it regards that message as immediately
received.   The pseudocode uses a number of message types, local
variables, functions and procedures, detailed below. While some  of the following definitions are the same as in Section \ref{formal}, we repeat them here for ease of reference. 

\vspace{0.2cm}
\noindent \textbf{The function} $\mathtt{lead}(v)$.  As in Section \ref{formal},  we set $\mathtt{lead}(v):= p_{j+1}$, where $j= v \text{ mod }n$.

\vspace{0.2cm}
\noindent \textbf{Blocks}. Recall that $\tau$ is the tag function, as defined in Appendix \ref{erasuresetup}. The \emph{genesis block} is the tuple $b_{\text{gen}}:=(0,\tau(\lambda),\lambda)$, where $\lambda$ denotes the empty sequence (of length 0).  A block other than the genesis block, with associated payload $C$,  is a tuple $b=(v,\tau(C), h)$ signed by $\mathtt{lead}(v)$, where:
\begin{itemize}
\item $v\in \mathbb{N}_{\geq 1}$ is the view corresponding to $b$;
\item $\tau(C)$ is the tag resulting from $C$;
\item $h$ is the hash of $b$'s parent block.
\end{itemize}
We write $b.\text{view}$, $b.\text{tag}$ and $b.\text{par}$ to denote the corresponding entries of $b$. If $b.\text{view}=v$, we also refer to $b$ as a `view $v$ block'. If $(c_i,\pi_i)$ is a certified fragment of $\tau(C)$ at $i$, we also say that the tuple $(b,i,c_i,\pi_i)$ is \emph{certified fragment of} $b$ at $i$: it is a message of this form that a correct leader of view $v$ will send to $p_i$ while in view $v$. 

\vspace{0.2cm}
\noindent \textbf{Votes}. A \emph{vote} by $p_i\in \Pi$ for the block $b$ is a message of the form $(\text{vote},b,i,\rho_i,\rho_i')$, where: 
\begin{itemize} 
\item $\rho_i$ is a (valid) signature share from $p_i$ on the message $(\text{vote},b)$, using a $(2f+1)$-of-$n$ threshold signature scheme; 
\item $\rho_i'$ is a signature share from $p_i$ on the message $(\text{vote},b)$, using an $(n-f)$-of-$n$ threshold signature scheme. 
\end{itemize} 
Processor $p_i$ may disseminate a vote for $b$ upon receiving a certified fragment of $b$ at $i$.  The vote for $b$ by $p_i$ contains signature shares from $p_i$ corresponding to $(2f+1)$-of-$n$ and $(n-f)$-of-$n$ threshold signature schemes. As detailed below, the first of these signature shares can be used to help form an `M-certificate' for $b$, while the second can be used to form an `L-certificate' for $b$.

\vspace{0.2cm}
\noindent \textbf{M-notarisations and certificates}. An M-\emph{notarisation} for the block $b$ is a set of $2f+1$ votes for $b$, each by a different processor in $\Pi$. (By an M-\emph{notarisation}, we mean an M-notarisation for some block.) 
An M-\emph{certificate} for the block $b$ is the message $(\text{M-cert},b,\rho)$, where $\rho$ is a $(2f+1)$-of-$n$ threshold certificate on the message $(\text{vote},b)$. 

\vspace{0.2cm}
\noindent \textbf{L-notarisations and certificates}. An \emph{L-notarisation} for the block $b$ is a set of $n-f$ votes for $b$, each by a different processor in $\Pi$. 
 An L-\emph{certificate} for the block $b$ is the message $(\text{L-cert},b,\rho)$, where $\rho$ is an $(n-f)$-of-$n$  certificate on the message $(\text{vote},b)$.

\vspace{0.2cm}
\noindent \textbf{Nullify messages and nullifications}. For $v\in \mathbb{N}_{\geq 1}$, a nullify$(v)$ message by $p_i$ is of the form $(\text{nullify},v,i,\rho_i)$, where $\rho_i$ is a (valid) signature share from $p_i$ on the message $(\text{nullify},v)$, using a $(2f+1)$-of-$n$ threshold signature scheme. A \emph{nullification} for view $v$ is a set of $2f+1$ nullify$(v)$ messages, each by a different processor in $\Pi$. (By a \emph{nullification}, we mean a nullification for some view.) An \emph{N-certificate} for view $v$ is a message $(\text{N-cert}, v,\rho)$, where $\rho$ is a $(2f+1)$-of-$n$ certificate on the message $(\text{nullify,}v)$.

\vspace{0.2cm}
\noindent \textbf{The local variable} $\mathtt{v}$. Initially set to 1, this variable specifies the present view of a processor.

\vspace{0.2cm}
\noindent \textbf{The local variable} $\mathtt{b}$. Initially set to $b_{\text{gen}}$, this variable is used by leaders to choose a parent block. 

\vspace{0.2cm}
\noindent \textbf{The local timer} $\mathtt{T}$. Each processor $p_i$ maintains a local timer $\mathtt{T}$, which is initially set to 0 and increments in real-time. (Processors will be explicitly instructed to reset their timer to 0 upon entering a new view.)

\vspace{0.2cm}
\noindent \textbf{The local variable} $\mathtt{certificates}$. Initially empty, this local variable for $p_i$ is automatically updated by $p_i$ (without explicit instructions in the pseudocode) to contain all M-certificates, L-certificates, and N-certificates received by $p_i$. If $p_i$ receives an M-notarisation, L-notarisation, or a nullification, then it automatically forms the associated certificate, and adds it to $\mathtt{certificates}$. 

An N-certificate $Q$ for view $v$ is regarded as \emph{new} at timeslot $t$ if $Q\in \mathtt{certificates}$ and $ \mathtt{certificates}$ did not contain an N-certificate for view $v$ at any timeslot $t'<t$. Similarly, an M-certificate $Q$ for $b$ is new at timeslot $t$ if $Q\in \mathtt{certificates}$ and $ \mathtt{certificates}$ did not contain an M-certificate for $b$ at any timeslot $t'<t$.

\vspace{0.2cm}
\noindent \textbf{The local variable} $\mathtt{blocks}$. Initially set to $\{ b_{\text{gen}} \}$, this local variable is automatically updated by $p_i$ (without explicit instructions in the pseudocode). At any timeslot, $\mathtt{blocks}$ contains $b_{\text{gen}}$ and all blocks $b$ such that (i)-(iii) below are all satisfied: 
\begin{itemize}
\item[(i)] $\mathtt{certificates}$ contains an  M-certificate for $b$;
\item[(ii)] There exists $I\subset [n]$ with $|I|=2f+1$ such that, for each $j\in I$, $p_i$ has received $(b,j,c_j,\pi_j)$ from $p_j$, which is a certified fragment of $b$ at $j$.  On input $(b.\text{tag}, \{ (c_j,\pi_j) \}_{j\in I})$, Decode does not output $\bot$;
\item[(iii)] There exists $b'\in \mathtt{blocks}$ with $H(b')=b.\text{par}$, i.e., the parent of $b$ is in $\mathtt{blocks}$. 
\end{itemize} 

\noindent At timeslot $t$, $b\in \mathtt{blocks}$ is regarded as \emph{new} if it was not in $\mathtt{blocks}$ at any previous timeslot. 

\vspace{0.2cm}
\noindent \textbf{The local timers} $\mathtt{T}(b)$. Processor $p_i$ will leave view $v$ upon  finding that $\mathtt{blocks}$ contains a view $v$ block $b$, or else upon finding that $\mathtt{certificates}$ contains an N-certificate for view $v$. In the former case, it will not be necessary during `standard operation' for $p_i$ to send out further certified fragments of $b$. However, if the leader of the view is faulty, it may be necessary to do so, so  that other processors can recover the payload associated with $b$. 

To achieve this, we consider a parameter $s$, which should be set to roughly the expected time between receiving $2f+1$ votes for $b$ and receiving $3f+1$ votes for $b$,  during synchrony and if the leader of the view is correct.  Upon putting a new block $b$ into $\mathtt{blocks}$,  $p_i$ \emph{triggers} a timer $\mathtt{T}(b)$ to expire in time $s$. When the timer expires, the pseudocode specifies to which processors $p_i$ must send their corresponding certified fragments of $b$.  

\vspace{0.2cm}
\noindent \textbf{The local variables} $\mathtt{nullified}$, $\mathtt{proposed}$, and $\mathtt{notarised}$. These are used by $p_i$ to record whether it has yet sent a nullify$(\mathtt{v})$ message, whether it has yet proposed a block for view $v$, and the block  it has voted for in the present view: $\mathtt{nullified}$ and $\mathtt{proposed}$ are initially set to false, while $\mathtt{notarised}$ is initially set to $\bot$ (a default value different than any block). These values will be explicitly reset upon entering a new view.

\vspace{0.2cm}
\noindent \textbf{The procedure} ProposeBlock. This procedure is executed by the leader $p_i$ of view $v$ to produce and send out a new block. To execute the procedure, $p_i$:
\begin{itemize}
\item  Forms a payload $C$, containing all received transactions not included the payloads of ancestors of $\mathtt{b}$.\footnote{The protocol ensures $\mathtt{b}\in \mathtt{blocks}$, so that $p_i$ knows the payload of every ancestor of $\mathtt{b}$.} 
\item Calculates Encode$(C):=(\tau(C), \{ (c_j,\pi_j ) \}_{j\in [n]} )$ and sets $b:=(v,\tau(C),H(\mathtt{b}))$ signed by $p_i$.  
\item For each $j\in [n]$, sends $(b,j,c_j,\pi_j)$ to $p_j$, i.e., sends the certified fragment of $b$ at  $j$ to $p_j$. 
\end{itemize}

 \vspace{0.2cm}
\noindent \textbf{When $p_i$ has received a votable fragment for view $v$}. If $b=(v,\sigma,h)$ is a view $v$ block signed by $\mathtt{lead}(v)$ and $(b,i,c_i,\tau_i)$ is a certified fragment  of $b$ at $i$, then $p_i$ regards this certified fragment as \emph{votable for view $v$} if: 
\begin{enumerate}
\item[(i)] there exists $b'\in \mathtt{blocks}$ with $H(b')=h$, $b'.\text{view}<v$, and; 
\item[(ii)] $\mathtt{certificates}$ contains an N-certificate for each view in the open interval $(b'.\text{view},v)$.
\end{enumerate}

\vspace{0.2cm} For ease of reference, local variables are displayed in the table below. The pseudocode appears in Algorithm 2.

\begin{table}[h!]
  \begin{center}
    \label{tab:table8}
    \begin{tabular}{l|l} 
      \textbf{Variable} & \textbf{Description} \\
      \hline
      $\mathtt{v}$ & Initially 1, specifies the present view \\
       $\mathtt{b}$ & Initially $b_{\text{gen}}$, used to specify parents \\
      $\mathtt{T}$ & Initially 0, a local timer reset upon entering each view \\
        $\mathtt{T}(b)$ & Expires after time $s$ once triggered\\
$\mathtt{nullified}$ & Initially false, specifies whether already sent nullify$(\mathtt{v})$ message \\
$\mathtt{proposed}$ & Initially false, specifies whether already proposed a block for view $\mathtt{v}$ \\
$\mathtt{notarised}$ & Initially set to $\bot$, records block voted for in present view \\
$\mathtt{certificates}$ & Initially empty, records all received certificates \\
 & Automatically updated \\
 $\mathtt{blocks}$ & Initially contains only $b_{\text{gen}}$, records all blocks with M-certificates \\
 &  and recovered payloads. Automatically updated \\

    \end{tabular}
        \caption{Local variables for E-Minimmit}

  \end{center}
\end{table}

 \begin{algorithm} \label{alg2}
\caption{: the instructions for $p_i$ at timeslot $t$}
\begin{algorithmic}[1]

\scriptsize

\State  At every timeslot $t$:   \Comment MAIN LOOP 

\State 

   \State  \hspace{0.1cm} Disseminate new N-certificates;  \label{Ndis}  \Comment `new' as defined in Section \ref{erasurespec}

      \State  \hspace{0.1cm} Disseminate new M-certificates;  \label{Mdis} 

\State 

 \State   \hspace{0.1cm} If $p_i=\mathtt{lead}(\mathtt{v})$ and $\mathtt{proposed}=$ false:

     \State \hspace{0.3cm} $\text{ProposeBlock}$; Set $\mathtt{proposed}:=$ true; \label{esendblock}   \Comment Send out a new block

     \State

      \State   \hspace{0.1cm} If $p_i$ has received $(b,i,c_i,\pi_i)$ that is votable for view $\mathtt{v}$:  \Comment `votable' as defined in Section \ref{formal}
      \State \hspace{0.3cm}  If $\mathtt{notarised}=\bot$ and $\mathtt{nullified}=$ false:  \label{evotecheck}
      \State \hspace{0.6cm} Set $\mathtt{notarised}:=b$ and disseminate a vote for $b$ by $p_i$; \label{evote1} \Comment Send vote
      \State \hspace{0.6cm} Disseminate  $(b,i,c_i,\pi_i)$ if $p_i$ has not already done so;

      \State
        \State   \hspace{0.1cm} If $\mathtt{T}=4\Delta+2s$, $\mathtt{nullified}=$ false and $\mathtt{notarised}=\bot$:  \label{etimeout1}
        \State \hspace{0.3cm} Set $\mathtt{nullified}:=$ true and disseminate a $\text{nullify}(\mathtt{v})$ message by $p_i$; \label{etimeout2} \Comment Send nullify$(\mathtt{v})$ upon time-out

        \State

        \State   \hspace{0.1cm}  If  $\mathtt{certificates}$ contains an N-certificate for $\mathtt{v}$:
   
        \State  \hspace{0.3cm}  Set $\mathtt{v}:=\mathtt{v}+1$, $\mathtt{nullified}:=$ false, $\mathtt{proposed}:=$ false, $\mathtt{notarised}:=\bot$, $\mathtt{T}:=0$; \label{enewview1} 
         \Comment Go to next view
         \State   \hspace{0.1cm}  If $\mathtt{blocks}$ contains a view $\mathtt{v}$ block $b$, 
          \State  \hspace{0.3cm} If  $\mathtt{notarised} =\bot$ and $\mathtt{nullified}=$ false, disseminate a vote for $b$ by $p_i$;  \label{evote2} \Comment Send vote (akin to line \ref{vote2} of Algorithm 1)

         \State  \hspace{0.3cm}  Set  $\mathtt{b}:=b$, $\mathtt{v}:=\mathtt{v}+1$, $\mathtt{nullified}:=$ false, $\mathtt{proposed}:=$ false, $\mathtt{notarised}:=\bot$, $\mathtt{T}:=0$; \label{enewview2} 
         \Comment Update $\mathtt{b}$ and go to next view

    \State 
              \State   \hspace{0.1cm}  If $\mathtt{certificates}$ contains an L-certificate for any non-finalised $b\in \mathtt{blocks}$: 
               \State \hspace{0.3cm} Finalise $b$ and all ancestors;   \Comment Finalisation

    \algrule
    
    \State \Comment EXTRA INSTRUCTIONS TO SEND NULLIFY MESSAGES

      \State  \hspace{0.1cm} If $\mathtt{nullified}=$ false, $\mathtt{notarised}\neq \bot$ and there exists $I\subset [n]$ with $|I|=2f+1$ and a set   \label{ebeginN}
      \State \hspace{0.1cm}of messages $\{ m_j \}_{j\in I} $ s.t.,  for each $j\in I$, $m_j$ has been received by $p_i$ and is either: 
      \State  \hspace{0.1cm} (i) a \text{nullify}$(\mathtt{v})$ message by $p_j$,  or;
      \State  \hspace{0.1cm} (ii) a vote by $p_j$ for some view $\mathtt{v}$ block  $b\neq \mathtt{notarised}$:   \label{eendN}

         \State  \hspace{0.3cm} Set $\mathtt{nullified}:=$ true;
         Disseminate a  $\text{nullify}(\mathtt{v}) $ message by $p_i$; \label{esendN}
            \State \Comment Disseminate nullify$(\mathtt{v})$ message upon proof of no progress for $\mathtt{v}$ (akin to lines \ref{beginN}-\ref{sendN} of Algorithm 1)

          \State

      \State \hspace{0.1cm} If  $\mathtt{nullified}=$false and there exists a view $\mathtt{v}$ block $b$ such that:  \label{n3a} 
      
      \State  \hspace{0.1cm} (i) $\mathtt{certificates}$ contains an  M-certificate for $b$.
      
         \State  \hspace{0.1cm} (ii) There exists $I\subset [n]$ with $|I|=2f+1$ such that, for each $j\in I$, $p_i$ has received $(b,j,c_j,\pi_j)$ from $p_j$, which is a certified
         \State  \hspace{0.1cm}   fragment of $b$ at $j$.  On input $(b.\text{tag}, \{ (c_j,\pi_j) \}_{j\in I})$, Decode outputs $\bot$.  \label{n3b}

               \State  \hspace{0.3cm} Set $\mathtt{nullified}:=$ true; 
               Disseminate a  $\text{nullify}(\mathtt{v})$ message by $p_i$; \label{esendN2}
                \State \Comment Disseminate nullify$(\mathtt{v})$ message upon notarisation for junk block (see Section \ref{junkintu})

        \algrule
          \State \Comment INSTRUCTIONS TO SEND EXTRA FRAGMENTS IN THE BAD CASE (see Section \ref{2.5intu})
          
          \State \hspace{0.1cm} For each new $b\in \mathtt{blocks}$: 
          
          \State \hspace{0.3cm} Trigger the timer $\mathtt{T}(b)$ to expire in time $s$;  \Comment $s$ a parameter
          
           \State

          \State \hspace{0.1cm}  For each $b$ such that $\mathtt{T}(b)$ expires at $t$:   \label{f1a} 

          \State \hspace{0.3cm} Set $I:= \{ j\in [n]:\ p_i \text{ has received a certified fragment of }b\text{ at }j \text{ from }p_j \}$; 
          \State \hspace{0.3cm} Set $x:=(3f+1)-|I|$, and if $x>0$:
     
          \State \hspace{0.6cm} Choose $J\subseteq [n]$ with $|J|=x$ and $J\cap I=\emptyset$; 
          \State \hspace{0.6cm} For each $j\in J$, send $p_j$ a certified fragment of $b$ at $j$;  \label{f1b} \Comment Send others their fragment

\State 

\State  \hspace{0.1cm}  For any block $b$ s.t.:  \label{f2a} 
\State   \hspace{0.1cm}  (i) $\mathtt{certificates}$ contains an M-certificate for $b$; 
\State   \hspace{0.1cm}  (ii) $p_i$ has received a certified fragment of $b$ at $i$, and;
\State  \hspace{0.1cm}  (iii) $p_i$ has not previously disseminated a certified fragment of $b$ at $i$: 
\State \hspace{0.3cm} Disseminate a certified fragment of $b$ at $i$;     \Comment Send own fragment  \label{f2b}

\end{algorithmic}
\end{algorithm}

\section{E-Minimmit: analysis} \label{erasureanalysis} 

\subsection{Consistency}  \label{econsec}
 We say a block $b$ receives an M-certificate if some processor receives an M-certificate for $b$ (and similarly for L-certificates). View $v$ receives an N-certificate if some processor receives an N-certificate for view $v$. The proof of consistency is very similar to that for Minimmit. The only substantive change is in the proof is the corresponding version of  Lemma \ref{X2}.

\begin{lemma}[One vote per view] \label{esinglevote}
Correct processors vote for at most one block in each view. 
\end{lemma}
\begin{proof}
The proof is the same as for Lemma \ref{singlevote}. 
\end{proof}

\begin{lemma}[(X1) is satisfied] \label{eX1}  If  $b$ receives an  L-certificate, then no block $b'\neq b$ with $b'.\text{view}=b.\text{view}$ receives an M-certificate. 
\end{lemma}
\begin{proof}
The proof is essentially identical to that for Lemma \ref{X1}. 
\end{proof}

\begin{lemma}[(X2) is satisfied]  \label{eX2} If  $b$ receives an L-certificate and some  correct processor finalises $b$,   then view $v:=b.\text{view}$ does not receive an N-certificate.
\end{lemma}
\begin{proof}
Towards a contradiction, suppose $b$ receives an L-certificate and some  correct processor finalises $b$, let $v:=b.\text{view}$,  and suppose view $v$ receives an N-certificate. Let $P$ be the correct processors that vote for $b$, let $P'=\Pi\setminus P$, and note that $|P'|\leq 2f$. Since view $v$ receives an N-certificate, it follows that some processor in $P$ must send a nullify$(v)$ message. So, let $t$ be the first timeslot at which some processor $p_i\in P$ sends such a message. Since $p_i$ cannot send a nullify$(v)$ message upon timeout (lines \ref{etimeout1}-\ref{etimeout2}), there are two possibilities. The first is that $p_i$ sends the nullify$(v)$ message at $t$ because the conditions of lines  \ref{ebeginN}-\ref{eendN} hold for $p_i$ at $t$, i.e., there exists  $I\subset [n]$ with $|I|=2f+1$ and a set of messages $\{ m_j \}_{j\in I} $ s.t.,  for each $j\in I$, $m_j$ has been received by $p_i$ and is either: 
\begin{enumerate} 
\item[(i)]  A \text{nullify}$(v)$ message by $p_j$,  or;
\item[(ii)] A vote by $p_j$ for some view $v$ block  $b'\neq b$. 
\end{enumerate} 
By Lemma \ref{esinglevote}, no processor in $P$ sends a message of form (ii). By our choice of $t$, no processor in $P$ sends  a message of form (i) prior to $t$. Combined with the fact that $|P'|\leq 2f$, this gives the required contradiction.

The second possibility is that $p_i$ sends the nullify$(v)$ message at $t$ because the conditions of lines \ref{n3a}-\ref{n3b} hold for $p_i$ at $t$.  The fact that some correct processor adds  $b$ into their local value $\mathtt{blocks}$ means that Decode does not output $\bot$ when given any $2f+1$ certified fragments of $b$ as input.   By Lemma \ref{eX1}, no other block for view $v$ receives an M-certificate. This gives the required contradiction. 
\end{proof}

\begin{lemma}[Consistency] The protocol satisfies Consistency.
\end{lemma}
\begin{proof}
Towards a contradiction, suppose some block $b_1$ with $b_1.\text{view}=v_1$ receives an L-certificate and is finalised by some correct processor, and that for some  \emph{least}  $v_2\geq v_1$ some block $b_2$ satisfies:
\begin{enumerate}
\item $b_2.\text{view}=v_2$;
\item   $b_1$ is not an ancestor of $b_2$, and;
\item $b_2$  receives an M-certificate. 
\end{enumerate}
 From Lemma \ref{eX1},  it follows that $v_2>v_1$. According to clause  (i) from the definition of a votable fragment,  correct processors will not vote for $b_2$ in line \ref{evote1} until they receive an M-certificate for its parent, $b_0$ say.  Correct processors will not vote for $b_2$ in line \ref{evote2} until $b_2$ has already received an M-certificate, meaning that at least $f+1$ correct processors must first vote for $b_2$ via line \ref{evote1}, and $b_0$ must receive an M-certificate. By our choice of $v_2$, it follows that  $b_0.\text{view}<v_1$. This gives a contradiction, because, by clause (ii) from the definition of a votable fragment for view $v_2$,  correct processors would not vote for $b_2$  in line \ref{evote1} without receiving an N-certificate for view $v_1$.  By Lemma \ref{eX2}, such an N-certificate cannot exist. So,  block $b_2$  cannot receive an M-notarisation or  M-certificate  (and no correct processor votes for $b_2$ via either line \ref{evote1} or \ref{evote2}).
\end{proof}

\subsection{Liveness} \label{elivesec}

We let $\mathtt{blocks}_i$ be the variable $\mathtt{blocks}$ as locally defined for $p_i$, and let $\mathtt{blocks}_i(t)$ be $\mathtt{blocks}_i$ as defined at the end of timeslot $t$. 

\begin{lemma}[Agreement on $\mathtt{blocks}$] \label{blockslem} 
If $p_i$ and $p_j$ are correct and $t$ is the least timeslot such that $b\in \mathtt{blocks}_i(t)$, then for $t':=\text{max} \{ t, \text{GST} \}+ 2\Delta +s$, $b\in  \mathtt{blocks}_j(t')$. 
\end{lemma} 
\begin{proof} 
The proof is by induction on $b.\text{view}$. The result holds trivially for $b_{\text{gen}}$. So, suppose the conditions in the statement of the lemma hold, $v:=b.\text{view}>0$,  and that the claim holds for all previous views (and so, for the parent of $b$). 

Since $b\in \mathtt{blocks}_i(t)$, $p_i$ receives an M-certificate for $b$ by $t$, and so disseminates such an M-certificate by $t$ (line \ref{Mdis}). It follows that all correct processors receive an M-certificate for $b$ by $\text{max} \{ t, \text{GST} \}+ \Delta$.  At $t+s$, $p_i$ executes lines \ref{f1a}-\ref{f1b} with respect to $b$. This ensures that at least $2f+1$ correct processors receive their own certified fragment of $b$ (as well as an M-certificate for $b$) by  $\text{max} \{ t+s, \text{GST} \}+ \Delta$. Lines \ref{f2a}-\ref{f2b} ensure that those correct processors will disseminate their fragments by the latter timeslot. All correct processors will therefore receive an M-certificate and fragments sufficient to decode the payload of $b$ by $\text{max} \{ t+s, \text{GST} \}+ 2\Delta$. By the induction hypothesis, and since the parent of $b$ is in $\mathtt{blocks}_i(t)$, it follows that for $t':=\text{max} \{ t, \text{GST} \}+ 2\Delta +s$ and for any correct processor $p_j$, $b\in  \mathtt{blocks}_j(t')$ as required. 
\end{proof} 

\begin{lemma}[Timely view entry] \label{timev} 
If correct $p_i$ enters view $v$ at $t$, then all correct processors enter view $v$ by $t':=\text{max} \{ t, \text{GST} \}+ 2\Delta +s$. 
\end{lemma} 
\begin{proof} 
The proof is by induction on $v$. For $v=1$, the claim is immediate. Suppose $v>1$ and that the claim holds for all previous views.  Let $p_i$, $t$ and $t'$ be as in the statement of the lemma. By the induction hypothesis, it follows that all correct processors enter view $v-1$ by $t'$. Processor $p_i$ either enters view $v$ at $t$ upon finding a view $v-1$ block in $\mathtt{blocks}$, or upon receiving an N-certificate for view $v-1$. In the former case, the claim follows by Lemma \ref{blockslem}. In the latter case, the claim follows because $p_i$ disseminates the N-certificate at $t$ (line \ref{Ndis}). 
\end{proof}

\begin{lemma}[Progression through views] Every correct processor enters every view $v\in \mathbb{N}_{\geq 1}$.
\end{lemma}
\begin{proof}  Towards a contradiction, suppose that some correct processor $p_i$ enters view $v$, but never enters view $v+1$.  From Lemma \ref{timev}, it follows that:
\begin{itemize}
\item All correct processors enter view $v$;
\item No correct processor leaves view $v$.
\end{itemize}
Each correct processor $p_j$ disseminates at least one of: 
\begin{itemize} 
\item[(a)] A nullify$(v)$ message; 
\item[(b)]  A vote for some view $v$ block $b$, together with the certified fragment of $b$ at $j$.  
\end{itemize} 
If there exists a view $v$ block $b$ such that at least $2f+1$ correct processors vote for $b$ and disseminate their corresponding certified fragments of $b$, and if Decode does not output $\bot$ when given those fragments as input, then we reach an immediate contradiction.  If there exists $b$ such that at least $2f+1$ correct processors vote for $b$ and disseminate their corresponding certified fragments of $b$, and if Decode outputs $\bot$ when given those fragments as input, then we also reach a contradiction: in this case, all correct processors will send nullify$(v)$ messages  (line \ref{esendN2}). So, suppose the previous two cases do not apply. If $p_j$ is a  correct processor that votes for a view $v$ block $b$, it follows that $p_j$  receives messages from at least $(n-f)-(2f)=n-3f\geq 2f+1$ processors, each of which is either:
\begin{enumerate}
\item[(i)]  A nullify$(v)$ message, or;
\item[(ii)]  A vote for a view $v$ block different than $b$.
\end{enumerate}
 So, the conditions of lines \ref{ebeginN}-\ref{eendN} are eventually satisfied, meaning that $p_j$ sends a nullify$(v)$ message (line \ref{esendN}). Any correct processor that does not vote for a view $v$ block also sends a nullify$(v)$ message, so all correct processors send nullify$(v)$ messages. All correct processors therefore receive a nullification for view $v$, giving the required contradiction.
\end{proof}

\begin{lemma}[Correct leaders finalise blocks] \label{eL1} If $p_i=\mathtt{lead}(v)$ is correct, and if the first correct processor to enter view $v$ does so after GST, then $p_i$ disseminates a block that is finalised by all correct processors.
\end{lemma}
\begin{proof} Suppose $p_i=\mathtt{lead}(v)$ is correct and that the first correct processor  to enter view $v$ does so at timeslot $t\geq\text{GST}$. From Lemma \ref{timev}, it follows that all correct processors (including $p_i$) enter view $v$ by $t+2\Delta+s$. 
Processor $p_i$ therefore disseminates a new block $b$  by $t+2\Delta+s$, which is received by all processors by $t+3\Delta+s$.  Let $b'$ be the parent of $b$ and set $v':=b'.\text{view}$. Since $b'\in \mathtt{blocks}_i(t+2\Delta+s)$, it follows from Lemma \ref{blockslem} that, for every correct $p_j$, $b'\in \mathtt{blocks}_j(t+4\Delta+2s)$.  Since $p_i$ has entered view $v$ at $t+2\Delta+s$, it must also have received N-certificates for all views in the open interval $(v',v)$ by this time, and all correct processors  receive these by $t+3\Delta +s$. All correct processors therefore vote for $b$ (by either line \ref{evote1} or \ref{evote2}) by $t+4\Delta+2s$,  before any correct processor sends a nullify$(v)$ message. The block $b$ therefore receives an L-notarisation and is finalised by all correct processors, as claimed.
\end{proof}

\begin{lemma}[Liveness] \label{eL2}The protocol satisfies Liveness.
\end{lemma}
\begin{proof}
Suppose correct $p_i$ receives the transaction $\text{tr}$. Let $v$ be a view with $\mathtt{lead}(v)=p_i$ and such that the first correct processor to enter view $v$ does so after GST.
By Lemma \ref{eL1}, $p_i$ will disseminate a block that is finalised by all correct processors.  From the definition of the ProposeBlock procedure, it follows that $\text{tr}$ will be included in the payload of an ancestor of $b$.  It follows that $\text{tr}$ is finalised for all correct processors upon finalisation of $b$. 
\end{proof}

\subsection{An optimisation reducing timeouts to $3\Delta$}
 According to Lemma \ref{eL1}, every view with a correct leader that begins after GST will produce a new finalised block if we set timeouts to $4\Delta+2s$. However, we can reduce timeouts with a small modification to the pseudocode, if we are prepared to accept the trade-off that view $v$ is then only guaranteed to produce a new finalised block if view $v-1$ begins after GST and the leaders of both views $v-1$ and $v$ are correct. 

\vspace{0.2cm} 
To do so, note first that while processors cannot immediately vote for a view $v$ block $b$ until they have received sufficiently many certified fragments of the parent $b'$ to recover its payload, we \emph{can} allow a correct processor to enumerate the parent $b'$ into their local value $\mathtt{blocks}$ upon receiving an M-certificate for the child $b$. This is because the existence of an M-certificate for $b$ suffices to ensure $f+1$ correct processors have already enumerated $b'$ into their local value $\mathtt{blocks}$. 

\vspace{0.2cm} 
We can therefore modify the way in which a correct processor $p_i$ defines its local value $\mathtt{blocks}$ as follows.  At any timeslot, $\mathtt{blocks}$ contains $b_{\text{gen}}$ and all blocks $b$ such that (i)-(iii) below are all satisfied: 
\begin{itemize}
\item[(i)] $\mathtt{certificates}$ contains an  M-certificate for $b$.
\item[(ii)] Either $\mathtt{certificates}$ contains an M-certificate for a child of $b$ or: 
\begin{enumerate} 
\item[(a)] There exists $I\subset [n]$ with $|I|=2f+1$ such that, for each $j\in I$, $p_i$ has received $(b,j,c_j,\pi_j)$ from $p_j$, which is a certified fragment of $b$ at $j$, and;
\item[(b)]  On input $(b.\text{tag}, \{ (c_j,\pi_j) \}_{j\in I})$, Decode does not output $\bot$. 
\end{enumerate} 
\item[(iii)] There exists $b'\in \mathtt{blocks}$ with $H(b')=b.\text{par}$, i.e., the parent of $b$ is in $\mathtt{blocks}$. 
\end{itemize} 

Now suppose we set timeouts to $3\Delta$. In what follows, we suppose that $2s\leq \Delta$, that view $v-1$ begins after GST (i.e., the first correct processor to enter the view does so after GST) and that the leaders of views $v-1$ and $v$ are both correct. There are four cases to consider. 

\vspace{0.2cm} 
\noindent \textbf{Case 1}. The first correct processor to enter view $v$ does so upon receiving an N-certificate for view $v-1$. The leader of view $v$ proposes a block $b$ with parent $b'$ such that $b'.\text{view}<v-1$. In this case, suppose the first correct processor to enter view $v-1$ does so at $t_0$, while the first correct processor to enter view $v$ does so at $t_1$. Note that $t_1\geq t_0+3\Delta$.  By Lemma \ref{timev}, $\mathtt{lead}(v)$ enters view $v-1$ by $t_0+2\Delta +s$ and $b'$ is in its local value $\mathtt{blocks}$ by this time. By Lemma \ref{blockslem}, and since $2s\leq \Delta$, $b'$ is in $\mathtt{blocks}_j$ for every correct processor $p_j$ by $t_1+2\Delta$. Processor $\mathtt{lead}(v)$ enters view $v$ by $t_1+\Delta$ and runs ProposeBlock by this time. All correct processors vote for $b$ by $t+2\Delta$. 

\vspace{0.2cm} 
\noindent \textbf{Case 2}. The first correct  processor to enter view $v$ does so upon receiving an N-certificate for view $v-1$. The leader of view $v$ proposes a block $b$ with parent $b'$ such that $b'.\text{view}=v-1$. In this case, suppose the first correct processor to enter view $v$ does so at $t$. Then all correct processors, including $\mathtt{lead}(v)$, do so by $t+\Delta$. All correct processors receive an M-certificate for $b'$ by $t+2\Delta$, and also add the parent $b''$ of $b'$ and all ancestors of $b''$ into their local value $\mathtt{blocks}$ by this time. Since $\mathtt{lead}(v-1)$ is correct, all correct processors  receive sufficiently many certified fragments of $b'$ to recover the payload by $t+3\Delta$, and vote for $b$ by this time. 

\vspace{0.2cm} 
\noindent \textbf{Case 3}. The first correct processor to enter view $v$ does so upon receiving an M-certificate for a view $v-1$ block $b'$. The leader of view $v$ proposes a block $b$ with $b'$ as parent. In this case, suppose  the first correct processor to enter view $v$ does so at $t$. Then all correct processors receive an M-certificate for $b'$ by $t+\Delta$. Since $\mathtt{lead}(v-1)$ is correct, all correct processors receive certified fragments sufficient to recover the payload of $b'$ by $t+2\Delta$, and have entered view $v$ and added $b'$ and all ancestors to their local value $\mathtt{blocks}$ by this time. In particular, $\mathtt{lead}(v)$ enters view $v$ by $t+2\Delta$, and all correct processors vote for their proposal $b$ by $t+3\Delta$. 

\vspace{0.2cm} 
\noindent \textbf{Case 4}. The first correct  processor to enter view $v$ does so upon receiving an M-certificate for a view $v-1$ block $b''$. The leader of view $v$ proposes a block $b$ with $b'$ as parent such that  $b'.\text{view}<v-1$.  In this case, suppose the first correct  processor to enter view $v-1$ does so at $t_0$, while the first correct processor $p_i$ to enter view $v$ does so at $t_1$. Then all correct processors receive an M-certificate for $b''$ by $t_1+\Delta$ and add all ancestors of the parent of $b''$ to their local value $\mathtt{blocks}$ by this time. Since some correct processors receive their certified fragments of $b''$ from $\mathtt{lead}(v-1)$ by $t_1$ and $\mathtt{lead}(v-1)$ is correct, all correct processors receive their certified fragments of $b''$ by $t_1+\Delta$, and so would vote for $b''$ by this time if they have not already timed out. Since $\mathtt{lead}(v)$ leaves view $v-1$ upon receiving an N-certificate for view $v-1$, it follows that $t_1\geq t_0+2\Delta$. By Lemma \ref{timev}, $\mathtt{lead}(v)$ enters view $v-1$ by $t_0+2\Delta +s$ and, by Lemma \ref{blockslem} and the fact that $2s\leq \Delta$, all correct processors add $b'$ to their local value $\mathtt{blocks}$ by $t_0+5\Delta\leq t_1+3\Delta$.  As in Case 3, all correct processors enter view $v$ by $t_1+2\Delta$. All correct processors therefore receive their certified fragment of $b$ from $\mathtt{lead}(v)$ by $t_1+3\Delta$ and it is votable for view $v$ by that time.

\end{document}